\newtheorem{theorem}{Theorem}
\newtheorem{lemma}{Lemma}
\newtheorem{definition}{Definition}
\newtheorem{corollary}{Corollary}
\newtheorem{remark}{Remark}
\begin{document}

\vspace*{3cm} \thispagestyle{empty}
\vspace{5mm}

\noindent \textbf{\Large Fermi coordinates, simultaneity, and expanding space in Robertson-Walker cosmologies}\\

\textbf{\normalsize  \textbf{\normalsize David Klein}\footnote{Department of Mathematics and Interdisciplinary Research Institute for the Sciences, California State University, Northridge, Northridge, CA 91330-8313. Email: david.klein@csun.edu.}
and Evan Randles}\footnote{Department of Mathematics, California State University, Northridge, Northridge, CA 91330-8313. Email: evan.randles@my.csun.edu}\\

\vspace{4mm} \parbox{11cm}{\noindent{\small Explicit Fermi coordinates are given for geodesic observers comoving with the Hubble flow in expanding Robertson-Walker spacetimes, along with exact expressions for the metric tensors in Fermi coordinates. For the case of non inflationary cosmologies, it is shown that Fermi coordinate charts are global, and space-time is foliated by space slices of constant Fermi (proper) time that have finite extent.  A universal upper bound for the proper radius of any leaf of the foliation, i.e., for the proper radius of the spatial universe at any fixed time of the geodesic observer, is given.  A general expression is derived for the geometrically defined Fermi relative velocity of a test particle (e.g. a galaxy cluster) comoving with the Hubble flow away from the observer.  Least upper bounds of superluminal recessional Fermi velocities are given for spacetimes whose scale factors follow power laws, including matter-dominated and radiation-dominated  cosmologies. Exact expressions for the proper radius of any leaf of the foliation for this same class of spacetimes are given. It is shown that the radii increase linearly with proper time of the observer moving with the Hubble flow. These results are applied to particular cosmological models.}\\

\noindent {\small KEY WORDS: Robertson-Walker cosmology, Fermi coordinates, Fermi relative velocity, Superluminal relative velocity, Foliation, Hubble flow}\\

\noindent Mathematics Subject Classification: 83F05, 83C10}\\
\vspace{6cm}
\pagebreak

\setlength{\textwidth}{27pc}
\setlength{\textheight}{43pc}
\noindent \textbf{{\normalsize 1. Introduction}}\\

\noindent The usual foliation of a Robertson-Walker spacetime ($\mathcal{M}, g)$ by maximally symmetric space slices $\{\Sigma_{t}\}$, parameterized by synchronous proper time $t$, determines a notion of simultaneity and 
leads in a natural way to Hubble's law,

\begin{equation}\label{introhubble}
\dot{d}(t)\equiv v_{H}=Hd.
\end{equation}
Here $H$ is the Hubble parameter, $d$ is the proper distance on $\Sigma_{t}$ from the observer to the test particle (such as a galaxy cluster), and the overdot on $d$ signifies differentiation with respect to $t$.  If $H>0$ and the distance $d$ is sufficiently large, the Hubble speed $v_{H}$ exceeds the speed of light. It is argued largely on this basis that in a physical sense, the universe is expanding, c.f., for example, \cite{gron, CG, cook, confusion}. \\

\noindent However, for a given geodesic observer with world line $\gamma(t)$ following the Hubble flow, another geometrically natural foliation by space slices $\{\mathcal{M}_{\tau}\}$, previously considered in \cite{rindler, page, ellis}, is also available and leads to a different definition of simultaneity, and to a different notion of relative velocity.  To define $\mathcal{M}_{\tau}$, let $\varphi_{\tau} : \mathcal{M} \rightarrow \mathbb{R}$ by,

\begin{equation}\label{slice}
\varphi_{\tau}(p)=g(\exp_{\gamma(\tau)}^{-1}p,\, \gamma^{\prime}(\tau)).
\end{equation}
Then,
\begin{equation}\label{slice2}
\mathcal{M}_{\tau}\equiv \varphi_{\tau}^{-1}(0).
\end{equation}

\noindent In Eq.\eqref{slice} the exponential map, $\exp_{p}(v)$, denotes the evaluation at affine parameter $1$ of the geodesic starting at the point $p\in\mathcal{M}$, with initial derivative $v$.\\

\noindent  We refer to $\mathcal{M}_{\tau}$ as the \emph{Fermi space slice of $\tau$-simultaneous events} and to the observer following the timelike geodesic path $\gamma(t)$ as the \emph{Fermi observer}.  $\mathcal{M}_{\tau}$ consists of all the spacelike geodesics orthogonal to the path of the Fermi observer at fixed proper time $t=\tau$.  The restriction $g_{\tau}$ of the space-time metric $g$ to $\mathcal{M}_{\tau}$ makes $(\mathcal{M}_{\tau}, g_{\tau})$ a Riemannian manifold. \\

\noindent Fermi coordinates are associated to the foliation $\{\mathcal{M}_{\tau}\}$ in a natural way.  Each spacetime point on $\mathcal{M}_{\tau}$ is assigned time coordinate $\tau$, and the spatial coordinates are defined relative to a parallel transported orthonormal reference frame.   Specifically, a Fermi coordinate system \cite{walker, MTW, MM63, LN79} along $\gamma$ is determined by an orthonormal tetrad of vectors, $e_{0}(\tau), e_{1}(\tau), e_{2}(\tau), e_{3}(\tau)$ parallel along $\gamma$, where  $e_{0}(\tau)$ is the four-velocity of the Fermi observer, i.e., the unit tangent vector of $\gamma(\tau)$. Fermi coordinates $x^{0}$, $x^{1}$, $x^{2}$, $x^{3}$ relative to this tetrad   are defined by,

\begin{equation}\label{F2}
\begin{split}
x^{0}\left (\exp_{\gamma(\tau)} (\lambda^{j}e_{j}(\tau))\right)&= \tau \\
x^{k}\left (\exp_{\gamma(\tau)} (\lambda^{j}e_{j}(\tau))\right)&= \lambda^{k}, 
\end{split} 
\end{equation}

\noindent where Latin indices  run over $1,2,3$ (and Greek indices run over $0,1,2,3$).  Here it is assumed that the $\lambda^{j}$ are sufficiently small so that the exponential maps in Eq.\eqref{F2} are defined.\\

\noindent From the theory of differential equations, a solution to the geodesic equations depends smoothly on its initial data so it follows that Fermi coordinates are smooth, and it may be shown in general that there  exists a neighborhood $U$ of $\gamma$ on which the map $ (x^{0}, x^{1}, x^{2}, x^{3}): U \rightarrow \mathbb{R}^{4}$ is well-defined and is a diffeomorphism onto its image, \cite{oniell} (p. 200).   We refer to $(x^{\alpha}, U)$ as a Fermi coordinate chart for $\gamma$.  \\

\noindent A particularly useful feature of Fermi coordinates is that the metric tensor expressed in these coordinates is Minkowskian to first order near the path of the Fermi observer, with second order corrections involving only the curvature tensor \cite{MM63}. General formulas in the form of Taylor expansions for coordinate transformations to and from more general Fermi-Walker coordinates were given in \cite{KC1} and exact transformation formulas for a class of spacetimes were given in \cite{CM, KC3}. Applications of these coordinate systems are voluminous. They include the study of tidal dynamics, gravitational waves, statistical mechanics, and the influence of curved space-time on quantum mechanical phenomena \cite{CM, Ishii, Marz, FG, KC9, KY, B,P80, PP82}.\\

\noindent In this paper, we find explicit expressions for the Fermi coordinates $(x^{0}, x^{1}, x^{2}, x^{3})$ for Robertson-Walker cosmologies when the scale factor $a(t)$ (see Eq.\eqref{frwmetric}) is an increasing function of $t$.  We show  that the Fermi chart $(x^{\alpha}, U)$ for the Fermi observer $\gamma(t)$ in non inflationary\footnote{A Robertson-Walker space-time is non inflationary if $\ddot{a}(t)\leq0$ for all $t$.} Robertson-Walker space-times is global, i.e., we may take $U=\mathcal{M}$.\\  

\noindent We prove that the proper radius $\rho_{\mathcal{M}_{\tau}}$ of $\mathcal{M}_{\tau}$ is an increasing function of $\tau$ and is bounded above by the speed of light divided by the Hubble pararmeter at time $\tau$, i.e., $1/H$.  We show, in addition, that synchronous time $t$ decreases to zero along any spacelike geodesic, $Y_{\tau}(\rho)$, orthogonal to the path of the Fermi observer at fixed proper time $t=\tau$, as the proper distance $\rho\rightarrow\rho_{\mathcal{M}_{\tau}}$. Thus, if the cosmological model includes a big bang, i.e., satisfies the condition that $a(0^{+}) = 0$, then at any proper time $\tau$ of the Fermi observer, the geodesic path $Y_{\tau}(\rho)$ in $\mathcal{M}_{\tau}$ terminates at the big bang. In this sense, also noted in \cite{page, ellis}, the big bang is simulaneous with all spacetime events for this class of cosmological models.\\

\noindent As a special case, for models with scale factors of the form $a(t)=t^{\alpha}$ for some  $0< \alpha\leq1$, including matter-dominated and radiation-dominated cosmologies, we find an exact expression for the proper radius of the Fermi space slice of $\tau$-simultaneous events,

\begin{equation}\label{MradiusIntro}
\rho_{\mathcal{M}_{\tau}}=\tau \frac{\sqrt{\pi}\,\,\Gamma(\frac{1}{2\alpha}+\frac{1}{2})}{2\alpha\,\,\Gamma(\frac{1}{2\alpha}+1)}.
\end{equation}\\

\noindent General relativity restricts the speed of a test particle to be less than the speed of light relative to an observer at the exact spacetime point of the test particle, but for test particles and observers located at different space-time points, the theory provides no \textit{a priori} definition of relative velocity, and hence no upper bounds on speeds. Distant particles may have  superluminal or only sub light speeds, depending on the coordinate system used for the calculations, and on the particular definition of relative velocity. \\ 

\noindent Mitigating such ambiguities, four geometrically defined, non equivalent notions of relative velocity were introduced and developed in a series of papers by V. Bol\'os (c.f. \cite{bolos2, bolos} and the references therein).  Of these, the Fermi relative velocity is most closely analogous to the Hubble relative velocity of Eq.\eqref{introhubble}, and lends itself to comparison.  For the case of a test particle receding radially from the Fermi observer, the Fermi relative velocity may be described as follows. \\

\noindent The world line of a receding test particle intersects each space slice $\mathcal{M}_{\tau}$ at a point $Y_{\tau}(\rho(\tau))$ on the spacelike geodesic $Y_{\tau}(\rho)$ in $\mathcal{M}_{\tau}$.  The Fermi speed for such a particle is,
\begin{equation}\label{introvf}
v_{F}=\frac{d}{d\tau}\rho(\tau).
\end{equation}
In Eq.\eqref{introvf}, $\rho(\tau)$ is the proper distance at proper time $\tau$ from the Fermi observer to the test particle's position in $\mathcal{M}_{\tau}$. General definitions and properties of Fermi relative velocity for observers and test particles following arbitrary timelike paths are included in \cite{bolos}, and Fermi relative velocities were applied in \cite{KC10} to study the influence of a cosmological constant on receding test particles in the vacuum surrounding a galaxy supercluster.\footnote{We remark that Hubble velocity was incorrectly referred to as Fermi relative velocity in \cite{KC10}.} \\

\noindent In the present paper, we find a general expression for the Fermi speeds of test particles comoving with the Hubble flow relative to the Fermi observer. We prove that the Fermi relative velocities can exceed the speed of light in cosmological models with $a(t)=t^{\alpha}$ for some  $0< \alpha <1$ and find sharp upper bounds proportional to $\rho_{\mathcal{M}_{\tau}}$ of those speeds.  We show in particular that superluminal relative velocities occur in matter dominated and radiation dominated  cosmologies, but not in the de Sitter universe, in contrast to superluminal Hubble speeds.\\

\noindent The paper is organized as follows.  In Sect. 2 we give basic definitions and solve the geodesic equations for spacelike geodesics orthogonal to the Fermi observer's path, $\gamma(t)$.  In Sect. 3, we construct Fermi coordinates, express the metric tensor in these coordinates, and prove that the Fermi chart is global for non inflationary cosmologies.  Sect. 4 begins with the universal upper bound for $\rho_{\mathcal{M}_{\tau}}$, and then gives exact expressions for Fermi relative speeds of comoving test particles in expanding Robertson-Walker spacetimes.  Included are superluminal least upper bounds for the relative Fermi speeds and exact values for $\rho_{\mathcal{M}_{\tau}}$ when the scale factors follow power laws.  Sect 5 illustrates and applies the preceding results to particular cosmological models, and Sect. 6 is devoted to concluding remarks and discussion of the expansion of space.   \\

\noindent \textbf{{\normalsize 2. Definitions and spacelike geodesics}}\\

\noindent The Robertson-Walker metric on space-time $\mathcal{M}=\mathcal{M}_{k}$ is given by the line element,

\begin{equation}\label{frwmetric}
ds^2=-dt^2+a(t)^2\left(d\chi^2+S^2_k(\chi)d\Omega^2\right),
\end{equation}

\noindent where $d\Omega^2=d\theta^{2}+\sin^{2}\theta \,d\varphi^{2}$, $a(t)$ is the scale factor, and, 

\begin{equation}\label{Sk}
S_{k}(\chi)=
\begin{cases}
\sin(\chi) &\text{if}\,\, k= 1\\
 \chi&\text{if}\,\,k= 0\\ 
 \sinh(\chi)&\text{if}\,\,k=-1.
\end{cases}
\end{equation}

\noindent The coordinate $t>0$ is synchronous proper time and $\chi, \theta, \varphi$ are dimensionless. The values $1,0,-1$ of the parameter $k$ distinguish the three possible maximally symmetric space slices for constant values of $t$ with positive, zero, and negative curvatures respectively.\\

\noindent  There is a coordinate singularity in Eq.\eqref{frwmetric} at $\chi=0$, but this will not affect the calculations that follow.  Consider the submanifold $\mathcal{M}_{\theta_0,\varphi_0}=\mathcal{M}_{\theta_0,\varphi_0,k}$ determined by $\theta=\theta_{0}$ and $\varphi=\varphi_{0}$.  The restriction of the metric to $\mathcal{M}_{\theta_0,\varphi_0}$ is given by, 

\begin{equation}\label{frwmetric2}
ds^{2}=-dt^2+a(t)^{2} d\chi^{2}.
\end{equation}  

\noindent On $\mathcal{M}_{\theta_0,\varphi_0}$, the coordinate $\chi$ can be extended to take all real values  if $k=0$ or $-1$, and for the case that $k=1$, the range of $\chi$ is an interval centered at zero, so there is no coordinate singularity at $\chi=0$ on the submanifold (see e.g., \cite{GP}).\\ 

\noindent Consider the observer with timelike geodesic path, $\gamma(t)=(t,0)$ in $\mathcal{M}_{\theta_0,\varphi_0}$. Our immediate aim is to find expressions for all spacelike geodesics orthogonal to $\gamma(t)$. From the Lagrangian for Eq.\eqref{frwmetric2}, it follows that $a(t)^{2} d\chi/d\rho$ is a constant $C$ along geodesics parametrized by arc length $\rho$.  Since the tangent vector to the geodesic has unit length we get,

\begin{equation}\label{tdot}
(\dot{t})^{2}\equiv\left(\frac{dt}{d\rho}\right)^{2}= \frac{C^{2}}{a(t)^{2}}-1.
\end{equation}

\noindent The requirement that $X=(\dot{t},\dot{\chi})$ is orthogonal to $(1,0)$, the tangent vector to $\gamma(t)$ at $t=\tau$, forces $C = a(\tau)\equiv a_{0}$.  We will assume throughout that $a(t)$ is an increasing function of $t$, so that by Eq.\eqref{tdot}, $\dot{t}<0$, and therefore,

\begin{equation}\label{X}
X=-\sqrt{\left(\frac{a_0}{a(t)}\right)^2-1}\,\,\frac{\partial}{\partial t}+\frac{a_0}{a^2(t)}\,\frac{\partial}{\partial \chi}.
\end{equation}

\begin{remark}\label{submanifold}
The arc length parameter $\rho$ for the geodesic $Y(\rho)=(t(\rho),\chi(\rho))$ with tangent vector $X$ may be chosen so that $Y(0)=(t(0),\chi(0))=\gamma(\tau)$.  With this convention, which we assume throughout, it follows from symmetry that $\chi(\rho)$ is an odd function of $\rho$. 
\end{remark}

\noindent In light of Remark \ref{submanifold} there is no loss of generality in restricting our attention to those spacetime points with space coordinate $\chi\geq0$ corresponding to $\rho\geq0$ for the purpose of finding spacelike geodesics orthogonal to $\gamma(t)$ with initial point on $\gamma(t)$.

\begin{remark}\label{bolos}
A general expression for $X$ in Cartesian coordinates was given in \cite{bolos} and may be deduced from Eq.\eqref{X} using the transformation of space coordinates, $S_{k}(\chi)= x/(1+\frac{1}{4}kx^{2})$. 
\end{remark}

\noindent The vector field $X$ can be integrated to give explicit formulas for the geodesic, $Y(\rho)$,  for the special cases that $a(t) = \exp(H_{0}t)$, where $H_{0}$ is Hubble constant, i.e., for the de Sitter universe \cite{CM}, and for the Milne universe (see, e.g., \cite{cook}), i.e., for $a(t) = t$.  To obtain integral expressions for the general case, we introduce a change of parameter from $\rho$ to $\sigma$.  For the sake of clarity of exposition, we assume henceforth  that $k= 0$ or $-1$ so that the range of $\chi$ is unrestricted. The minor modifications needed to deal with the case $k=1$ would include altering the range of $\sigma$ indicated in Eq.\eqref{parameter}, but the methods are the same.  Let,

\begin{equation}\label{parameter}
\sigma=\left(\frac{a_0}{a(t)}\right)^2=a_0\dot{\chi}\quad\text{and}\quad \sigma\in \left[1,\sigma_{\infty}(\tau)\right),
\end{equation}
where 
\begin{equation}\label{range}
\sigma_{\infty}(\tau)\equiv
\begin{cases}
\left(a(\tau)/a_{\text {inf}}\right)^{2}&\text{if}\,\,a_{\text {inf}}\equiv \lim_{t\rightarrow0^{+}}a(t)>0\\
\infty&\text{if}\,\,\lim_{t\rightarrow0^{+}}a(t)=0.
\end{cases}
\end{equation}

\noindent Using Eq.\eqref{X}, it follows that,

\begin{equation}\label{tsigma}
\dot{t}=-\sqrt{\sigma-1},
\end{equation}

\noindent and then differentiating Eq.\eqref{parameter} gives,

\begin{equation}\label{sigmadot}
\frac{d\sigma}{d\rho}=a_0\ddot{\chi}=2\frac{\dot{a}(t)}{a(t)}\sigma \sqrt{\sigma-1},
\end{equation}

\noindent where $\dot{a}(t)$ denotes $da/dt$. From the chain rule,

\begin{equation}\label{dsigmadchi}
\frac{d\chi}{d\rho}=\frac{\sigma}{a_0}=\frac{d\chi}{d\sigma}\frac{d\sigma}{d\rho},
\end{equation}

\noindent and combining this with Eq.\eqref{sigmadot} gives,

\begin{equation}\label{dchidsigma}
\frac{d\chi}{d\sigma}=\frac{a(t)}{2a_0 \dot{a}(t)\sqrt{\sigma-1}}.
\end{equation}

\noindent We end this section with a theorem and corollary that give explicit integral formulas for spacelike geodesics orthogonal to the timelike path $\gamma(t)$.

\begin{theorem}\label{general} Let $a(t)$ be a smooth, increasing function of $t$ with inverse function $b(t)$.  Then the spacelike geodesic orthogonal to $\gamma(t)$ at $t=\tau$ and parametrized by the (non-affine) parameter $\sigma$ is given by $Y_{\tau}(\sigma)=(t(\tau,\sigma),\chi(\tau,\sigma))$ where,

\begin{eqnarray}
t(\tau,\sigma)&=&b\left(\frac{a(\tau)}{\sqrt{\sigma}}\right)\label{thm1}\\
\chi(\tau,\sigma)&=&\frac{1}{2}\int_{1}^{\sigma}\dot{b}\left(\frac{a(\tau)}{\sqrt{\tilde{\sigma}}}\right)\frac{1}{\sqrt{\tilde{\sigma}}\sqrt{\tilde{\sigma}-1}}d\tilde{\sigma},\label{thm2}
\end{eqnarray}
\noindent and where the overdot on $b$ denotes differentiation.  Moreover, for fixed $\tau$, the arc length $\rho$ along $Y_{\tau}(\sigma)$ is given by,

\begin{equation}\label{thm3}
\rho=\rho_{\tau}(\sigma)=\frac{a(\tau)}{2}\int_{1}^{\sigma}\dot{b}\left(\frac{a(\tau)}{\sqrt{\tilde{\sigma}}}\right)\frac{1}{\tilde{\sigma}^{3/2}\sqrt{\tilde{\sigma}-1}}d\tilde{\sigma}.
\end{equation}

\end{theorem}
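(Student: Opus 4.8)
The plan is to exploit the fact that the differential relations \eqref{tsigma}--\eqref{dchidsigma} already encode the geodesic equations after the reparametrization from arc length $\rho$ to $\sigma$, so that all three formulas follow from a single algebraic inversion together with two quadratures. Before integrating, I would first record that $\sigma$ is a legitimate parameter: since $a$ is increasing we have $\dot a(t)>0$, so \eqref{sigmadot} gives $d\sigma/d\rho>0$ for $\sigma>1$, whence $\sigma$ is a strictly monotone function of $\rho$ along the geodesic and the change of variables is valid. The starting point $\gamma(\tau)$ corresponds to $\sigma=1$, where $a(t)=a(\tau)=a_0$, and by Remark \ref{submanifold} the initial conditions there are $t=\tau$, $\chi=0$, and $\rho=0$.

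Formula \eqref{thm1} is then immediate: solving $\sigma=(a_0/a(t))^2$ for $a(t)$ gives $a(t)=a(\tau)/\sqrt\sigma$, and applying the inverse function $b$ yields $t(\tau,\sigma)=b(a(\tau)/\sqrt\sigma)$. For \eqref{thm2}, the key step is the inverse-function identity: differentiating $b(a(t))=t$ gives $\dot b(a(t))\,\dot a(t)=1$, so that $1/\dot a(t)=\dot b(a(t))=\dot b(a(\tau)/\sqrt\sigma)$. Substituting this, together with $a(t)=a(\tau)/\sqrt\sigma$ and $a_0=a(\tau)$, into \eqref{dchidsigma} collapses the prefactor to $\tfrac12\,\sigma^{-1/2}(\sigma-1)^{-1/2}\,\dot b(a(\tau)/\sqrt\sigma)$, and integrating from $1$ to $\sigma$ with $\chi(\tau,1)=0$ produces \eqref{thm2}.

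For the arc length \eqref{thm3}, I would compute $d\rho/d\sigma$ either by inverting \eqref{sigmadot} directly or by combining $d\rho/d\chi=a_0/\sigma$ (from \eqref{dsigmadchi}) with the expression for $d\chi/d\sigma$ just obtained; the same substitutions give $d\rho/d\sigma=\tfrac{a(\tau)}{2}\,\sigma^{-3/2}(\sigma-1)^{-1/2}\,\dot b(a(\tau)/\sqrt\sigma)$, and integrating with $\rho_\tau(1)=0$ yields \eqref{thm3}. The computations themselves are routine, so the only point genuinely requiring care—and the one I would treat as the main obstacle—is the behavior at the lower endpoint $\sigma=1$: the integrands carry a factor $(\sigma-1)^{-1/2}$, so I must confirm that this is an integrable singularity (it is, since $(\sigma-1)^{-1/2}$ is integrable near $1$ and $\dot b$ is continuous), and that identifying $\sigma=1$ with $\gamma(\tau)$ correctly fixes the constants of integration to the values dictated by Remark \ref{submanifold}.
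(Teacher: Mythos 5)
Your proposal is correct and takes essentially the same route as the paper's own proof: Eq.\eqref{thm1} follows by inverting Eq.\eqref{parameter}, and the inverse-function identity $\dot{b}(a(t))=1/\dot{a}(t)$ is substituted into Eqs.\eqref{dchidsigma} and \eqref{sigmadot}, which are then integrated from $\sigma=1$ with the initial conditions fixed at $\gamma(\tau)$. Your additional checks, monotonicity of $\sigma$ along the geodesic and integrability of the $(\sigma-1)^{-1/2}$ singularity at the lower endpoint, are sound refinements of points the paper leaves implicit.
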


\begin{proof}
Eq.\eqref{thm1} follows immediately from Eq.\eqref{parameter}.  To prove the other two equations, observe that by the inverse function theorem,

\begin{equation}\label{thm4}
\dot{b}(a(t))=\frac{1}{\dot{a}(t)}.
\end{equation}

\noindent From Eq.\eqref{parameter}, $a(t)=a_{0}/\sqrt{\sigma}$ and combining this with Eq.\eqref{thm4} gives,

\begin{equation}\label{goodlemma}
\frac{a(t)}{\dot{a}(t)}=\frac{a(\tau)}{\sqrt{\sigma}}\,\,\dot{b}\left(\frac{a(\tau)}{\sqrt{\sigma}}\right).
\end{equation}

\noindent Substituting Eq.\eqref{goodlemma} into Eqs.\eqref{dchidsigma} and \eqref{sigmadot} and integrating yields Eqs.\eqref{thm2} and \eqref{thm3}.
\end{proof}

\noindent From Eq.\eqref{thm3} it follows that for a fixed value of $\tau$,  $\rho$ is a smooth, increasing function of  $\sigma$ with a smooth inverse which we denote by,

\begin{equation}\label{inverse}
\sigma_{\tau}(\rho)=\sigma(\rho).
\end{equation}
Combining Eq.\eqref{inverse} with Theorem \ref{general} immediately gives the following corollary.

\begin{corollary}\label{affinegeodesic}
 Let $a(t)$ be a smooth, increasing function of $t$ with inverse function $b(t)$.  Then the spacelike geodesic orthogonal to $\gamma(t)$ at $t=\tau$, and parametrized by arc length $\rho$, is given by

\begin{equation}\label{Y(rho)}
Y_{\tau}(\rho)=(t(\tau,\sigma(\rho)),\chi(\tau,\sigma(\rho))).
\end{equation}

\end{corollary}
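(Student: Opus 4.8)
The plan is to read off the arc-length parametrization by inverting the explicit change of variable $\rho=\rho_{\tau}(\sigma)$ supplied by Eq.\eqref{thm3} and then composing that inverse with the $\sigma$-parametrized coordinate formulas $t(\tau,\sigma)$ and $\chi(\tau,\sigma)$ from Theorem \ref{general}. In other words, once I know that $\sigma\mapsto\rho_{\tau}(\sigma)$ is a diffeomorphism onto its image, the corollary is purely a statement about composing functions: I substitute $\sigma=\sigma(\rho)$ into Eqs.\eqref{thm1} and \eqref{thm2}, and the resulting curve $Y_{\tau}(\rho)=(t(\tau,\sigma(\rho)),\chi(\tau,\sigma(\rho)))$ is by construction the same geodesic as $Y_{\tau}(\sigma)$, merely reparametrized so that the parameter now measures arc length. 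So the only thing requiring justification is the existence and smoothness of the inverse $\sigma(\rho)$ already announced in Eq.\eqref{inverse}.

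To secure that inverse, I would show that for fixed $\tau$ the integrand in Eq.\eqref{thm3} is strictly positive on $(1,\sigma_{\infty}(\tau))$. Since $a$ is smooth and increasing, its inverse $b$ is smooth and increasing, so by Eq.\eqref{thm4} one has $\dot{b}=1/\dot{a}>0$; evaluated at the positive argument $a(\tau)/\sqrt{\tilde{\sigma}}$ this factor stays positive, and the remaining factors $\tilde{\sigma}^{-3/2}$ and $(\tilde{\sigma}-1)^{-1/2}$ are positive for $\tilde{\sigma}>1$. Hence $d\rho/d\sigma>0$, so $\rho_{\tau}$ is strictly increasing, and a (set-theoretic) inverse exists. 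Smoothness of $\sigma(\rho)$ on the open interval then follows from the inverse function theorem, since $d\rho/d\sigma$ is finite and nonzero there. Consistency with the convention of Remark \ref{submanifold} is checked by noting that the integrand has only an integrable $(\tilde{\sigma}-1)^{-1/2}$ singularity at the lower endpoint, so $\rho_{\tau}(1)=0$ is finite and corresponds to $Y_{\tau}(0)=\gamma(\tau)$.

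The step I expect to be the main (and only real) obstacle is the endpoint behavior at $\sigma=1$, i.e.\ $\rho=0$: there $d\rho/d\sigma\to+\infty$, so the plain inverse function theorem does not apply, and one must argue separately that $\sigma(\rho)$ extends to a genuinely smooth function up to $\rho=0$ rather than merely a continuous one. The cleanest way around this is to invoke the fact, already recorded in the Introduction, that $Y_{\tau}$ is a geodesic and hence a smooth curve by smooth dependence of solutions of the geodesic equations on their initial data; thus the apparent singularity of the $\sigma$-parametrization at $\sigma=1$ is an artifact of the choice of the non-affine parameter and not of the curve itself. With that observation, the composition in Eq.\eqref{Y(rho)} merely expresses a curve already known to be smooth in explicit closed form, and the corollary follows immediately from Theorem \ref{general} together with Eq.\eqref{inverse}.
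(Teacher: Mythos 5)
Your proposal is correct and takes essentially the same route as the paper: the paper likewise treats the corollary as an immediate consequence of Theorem \ref{general} once the arc-length function $\rho_{\tau}(\sigma)$ of Eq.\eqref{thm3} is noted to be smooth and increasing in $\sigma$ with smooth inverse $\sigma(\rho)$ as in Eq.\eqref{inverse}. Your extra care at the endpoint $\sigma=1$, where $d\rho/d\sigma$ diverges and the plain inverse function theorem fails, resolved by appealing to smoothness of the geodesic in its arc-length parameter, is a legitimate refinement of a point the paper passes over silently, but it does not alter the argument.
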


\noindent \textbf{{\normalsize 3. Fermi coordinates}}\\

\noindent In this section we find explicit Fermi coordinates for the timelike geodesic co-moving observer in Robertson-Walker cosmologies, and show that with suitable assumptions on the scale factor $a(t)$, Fermi coordinates cover the entire spacetime. We begin with some technical results needed for that purpose.\\

\noindent In this section, we use the following notation,

\begin{equation}\label{openU}
 U=\{(\tau,\sigma): \tau>0\, \text{and}\,\sigma\in \left(1,\sigma_{\infty}(\tau)\right)\}
\end{equation}
and
\begin{equation}\label{notopenU}
 U_1=\{(\tau,\sigma): \tau>0\, \text{and}\,\sigma\in \left[1,\sigma_{\infty}(\tau)\right)\},
\end{equation}
where $\sigma_{\infty}(\tau)$ is given by Eq.\eqref{range}.  Observe that $U$ is an open subset of $\mathbb{R}^{2}$.

\begin{lemma}\label{lem1}
In addition to the hypotheses of Theorem \ref{general}, assume that $a(t)$ is unbounded and $\ddot{b}(t)\geq0$ for all $t>0$. Then the map $F :U_{1}\to (0,\infty)\times[0,\infty)$ given by,

\begin{equation}\label{Function}
F(\tau,\sigma)=\left(t(\tau,\sigma), \chi(\tau,\sigma)\right)=Y_{\tau}(\sigma),
\end{equation}

\noindent is a bijection, and $F :U\to (0,\infty)\times(0,\infty)$ is a diffeomorphism. Here, the functions $t$ and $\chi$ are defined by Eqs.\eqref{thm1} and \eqref{thm2} respectively.
\end{lemma}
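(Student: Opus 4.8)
The plan is to prove the lemma by establishing three properties of $F$ and then invoking the inverse function theorem: that $F$ is a smooth local diffeomorphism on the open set $U$, that it is globally injective on $U_1$, and that its image is exactly the stated target. Smoothness on $U$ is the routine starting point: $t(\tau,\sigma)=b\big(a(\tau)/\sqrt{\sigma}\big)$ is smooth since $a$ and $b$ are, and the integral in Eq.\eqref{thm2} defining $\chi$ has only an integrable $(\tilde\sigma-1)^{-1/2}$ singularity at its lower endpoint, so one may differentiate under the integral sign for $\sigma>1$. That same singularity makes $\partial\chi/\partial\sigma$ blow up as $\sigma\to1^{+}$, which is precisely why the diffeomorphism claim is confined to $U$ while only bijectivity is asserted on $U_1$.

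Next I would compute the Jacobian determinant $J=\partial_\tau t\,\partial_\sigma\chi-\partial_\sigma t\,\partial_\tau\chi$ from the formulas of Theorem \ref{general}. Here $\partial_\sigma\chi$ is the fundamental-theorem-of-calculus boundary term, $\partial_\tau\chi$ is an integral whose integrand is proportional to $\ddot b\big(a(\tau)/\sqrt{\tilde\sigma}\big)$, and $\partial_\sigma t=-\tfrac{a(\tau)}{2}\sigma^{-3/2}\dot b\big(a(\tau)/\sqrt{\sigma}\big)<0$. The resulting expression for $J$ splits into one manifestly positive term plus a term carrying the sign of $\int_1^\sigma \ddot b\big(a(\tau)/\sqrt{\tilde\sigma}\big)\,(\tilde\sigma\sqrt{\tilde\sigma-1})^{-1}\,d\tilde\sigma$. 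This is exactly where the hypothesis $\ddot b\ge0$ earns its keep: it forces that integral, and hence the second term, to be nonnegative, so $J>0$ on all of $U$ and $F$ is an orientation-preserving local diffeomorphism there.

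For injectivity I would foliate by the level sets of $t$. Since $b$ is injective, $t(\tau_1,\sigma_1)=t(\tau_2,\sigma_2)$ forces $a(\tau_i)/\sqrt{\sigma_i}$ to coincide, so the level set $t=t_0$ is the curve $\sigma=(a(\tau)/a(t_0))^2$ with $\tau\in[t_0,\infty)$. Writing $g(\tau)=\chi(\tau,\sigma(\tau))$ and differentiating the constraint $t\equiv t_0$ gives $g'(\tau)=-J/\partial_\sigma t$, which is strictly positive because $J>0$ and $\partial_\sigma t<0$. Thus $\chi$ is strictly increasing along each level set and vanishes only at $\sigma=1$; combined with the fact that distinct $t$-values yield distinct images, this shows $F$ is injective on $U_1$ and that $\chi>0$ exactly when $\sigma>1$.

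Finally, for surjectivity I would show each level set is carried onto a full vertical ray. The substitution $s=a(\tau)/\sqrt{\tilde\sigma}$ rewrites $\chi$ along $\{t=t_0\}$ as $a(\tau)\int_{a(t_0)}^{a(\tau)}\dot b(s)\,\big(s\sqrt{a(\tau)^2-s^2}\big)^{-1}\,ds$, which is $0$ at $\tau=t_0$; bounding $\dot b(s)\ge\dot b(a(t_0))$ (using $\ddot b\ge0$ once more) and evaluating the elementary integral produces a lower bound of order $\dot b(a(t_0))\,\ln\big(2a(\tau)/a(t_0)\big)$, so $\chi\to\infty$ as $\tau\to\infty$ because $a$ is unbounded. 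By continuity and the strict monotonicity already established, $\chi$ sweeps out $[0,\infty)$ on each level set while $t$ sweeps out $(0,\infty)$, giving the claimed images; the inverse function theorem then upgrades the bijection $F\colon U\to(0,\infty)\times(0,\infty)$ to a diffeomorphism. I expect this last surjectivity step to be the main obstacle, since it is where one must extract a clean growth estimate for $\chi$ far out on each level set, and it is the one place where both hypotheses—$a$ unbounded and $\ddot b\ge0$—are genuinely indispensable.
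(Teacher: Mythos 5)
Your proposal is correct and takes essentially the same route as the paper's own proof: the paper likewise reduces bijectivity to a level-set argument in $t$ (fixing $t_1$, parametrizing the level set by $\tau$, and showing $\chi$ is strictly increasing with range $[0,\infty)$ along it), then computes the same Jacobian determinant, uses $\ddot b\geq 0$ to conclude $J>0$, and invokes the inverse function theorem. The only cosmetic differences are that the paper derives the monotonicity of $\chi$ along each level set directly from $a$ and $\dot b$ being nondecreasing rather than from the sign of $J$, and it compresses your substitution $s=a(\tau)/\sqrt{\tilde\sigma}$ and logarithmic lower bound into an unstated ``short calculation'' establishing that $\chi$ sweeps out $[0,\infty)$.
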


\begin{proof}

Let $(t_1,\chi_1)\in(0,\infty)\times[0,\infty)$ be arbitrary but fixed. We show that $F(\tau_{1},\sigma_{1})=(t_1,\chi_1)$ for a uniquely determined pair $(\tau_{1},\sigma_{1})\in U_{1}$. From Eq.\eqref{thm1} we must have,

\begin{equation}\label{sigma1}
\sigma_{1}=\left(\frac{a(\tau_{1})}{a(t_1)}\right)^2.
\end{equation}
It remains to find $\tau_1$. To that end, define the function $\sigma(\tau)$ by,
\begin{equation}\label{lem1.1}
\sigma(\tau)\equiv\left(\frac{a(\tau)}{a(t_1)}\right)^2.
\end{equation}
From the hypotheses of the lemma, $\sigma(\tau)$ is an unbounded increasing function of $\tau$. Then by Eq.\eqref{thm2},

\begin{equation}\label{lem1.2}
\chi(\tau)\equiv\chi(\tau,\sigma(\tau))=\frac{1}{2}\int_{1}^{\sigma(\tau)}\dot{b}\left(\frac{a(\tau)}{\sqrt{\tilde{\sigma}}}\right)\frac{1}{\sqrt{\tilde{\sigma}}\sqrt{\tilde{\sigma}-1}}d\tilde{\sigma}.
\end{equation}
By hypothesis, both $a$ and $\dot{b}$ are non decreasing functions so it follows that $\chi(\tau)$ is strictly increasing. A short calculation shows that  $\chi(\tau)$ has range $[0,\infty)$ for $\tau\geq t_1$.  Thus, by continuity there must exist a unique $\tau_1\geq t_1$ such that $\chi(\tau_1)=\chi_1$.  It now follows from Eq.\eqref{sigma1} that $F(\tau_{1},\sigma_{1})=(t_1,\chi_1)$ and $F$ is a bijection.
Now consider the restricted map,
\begin{equation}
F:U\to (0,\infty)\times(0,\infty).
\end{equation}
By direct calculation, the Jacobian determinant $J(\tau,\sigma)$ is given by,
\begin{equation}\label{jacobian}
J(\tau,\sigma)=\frac{\dot{a}(\tau)}{2 \sigma}\dot{b}\left(\frac{a(\tau)}{\sqrt{\sigma}}\right)\left(\frac{\dot{b}\left(\frac{a(\tau)}{\sqrt{\sigma}}\right)}{\sqrt{\sigma-1}}+\frac{a(\tau)}{2\sqrt{\sigma}}\int_{1}^{\sigma}\frac{\ddot{b}\left(\frac{a(\tau)}{\sqrt{\tilde{\sigma}}}\right)}{\tilde{\sigma}\sqrt{\tilde{\sigma}-1}}d\tilde{\sigma}\right).
\end{equation}
Since $a,\dot{a},\dot{b}$ are positive and $\ddot{b}\geq0$, it follows immediately that $J(\sigma,\tau)>0$ on its domain, and by the inverse function theorem $F$ is a diffeomorphism.
\end{proof}

\noindent Let $G(\tau, \sigma) = (\tau, \rho(\sigma))$.  Then G is a diffeomorphism with inverse, $G^{-1}(\tau,\rho)= (\tau, \sigma(\rho))$ and non vanishing Jacobian.  Using the notation of Lemma \ref{lem1} define,

\begin{equation}\label{H}
H(t,\chi) = G\circ F^{-1}(t,\chi). 
\end{equation}
Then $H$ is a diffeomorphism from  $(0,\infty)\times(0,\infty)$ onto an open subset of $(0,\infty)\times(0,\infty)$ and may be extended to a bijection with domain  $(0,\infty)\times[0,\infty)$.  We state this result as a corollary:

\begin{corollary} \label{tau,s}
Let $a(t)$ be a smooth, increasing, unbounded function on $(0,\infty)$ with inverse function $b(t)$ satisfying $\ddot{b}(t)\geq0$ for all $t>0$. Then the function $(\tau, \rho) = H(t,\chi)$ given by Eq.\eqref{H} is a diffeomorphism from $(0,\infty)\times(0,\infty)$ onto an open subset of $(0,\infty)\times(0,\infty)$ and $H$ may be extended to a bijection with domain $(0,\infty)\times[0,\infty)$.
\end{corollary}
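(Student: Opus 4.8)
The plan is to assemble the corollary directly from the pieces already established, since by definition $H=G\circ F^{-1}$ and both factors have been analyzed. First I would invoke Lemma \ref{lem1}, which states that $F$ is a diffeomorphism from $U$ onto $(0,\infty)\times(0,\infty)$; hence $F^{-1}$ is a diffeomorphism in the reverse direction, with image the open set $U$. Separately, I would verify that $G(\tau,\sigma)=(\tau,\rho_{\tau}(\sigma))$ is a diffeomorphism on $U$. The remark following Theorem \ref{general} already records that, for each fixed $\tau$, the arc length $\rho=\rho_{\tau}(\sigma)$ of Eq.\eqref{thm3} is a smooth, strictly increasing function of $\sigma$. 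Because the first coordinate is preserved, the Jacobian matrix of $G$ is lower triangular with diagonal entries $1$ and $\partial\rho/\partial\sigma$, so its determinant equals $\partial\rho/\partial\sigma>0$ throughout $U$; together with injectivity (equal first coordinates force equal $\tau$, and $\rho_{\tau}$ is increasing in $\sigma$), the inverse function theorem makes $G$ a diffeomorphism onto its image.

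The one point that needs more than a one-line citation is the joint smoothness of $G^{-1}(\tau,\rho)=(\tau,\sigma_{\tau}(\rho))$ in both variables, since the remark after Theorem \ref{general} asserts only smoothness of $\sigma_{\tau}$ for each fixed $\tau$. I would obtain the joint statement from the inverse function theorem applied to the full two-variable map $G$ on the open set $U$: the nonvanishing Jacobian supplies a local smooth inverse at each point, these local inverses necessarily agree with the fiberwise $\sigma_{\tau}(\rho)$, and so they patch to a globally smooth $G^{-1}$. This is the only place where one must take care not to conflate fiberwise regularity with joint regularity, and it is the step I expect to be the main (if modest) obstacle; everything else is bookkeeping about compositions.

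With both $F^{-1}$ and $G$ established as diffeomorphisms, the composition $H=G\circ F^{-1}$ is a diffeomorphism from $(0,\infty)\times(0,\infty)$ onto $G(U)$. Since a diffeomorphism is an open map and $U$ is open, $G(U)$ is an open subset of $(0,\infty)\times(0,\infty)$, which gives the first assertion. For the extension, I would use the bijective (though no longer smooth) extensions of both factors to the closed parameter set: Lemma \ref{lem1} extends $F$ to a bijection $U_{1}\to(0,\infty)\times[0,\infty)$, and $G$ extends to $U_{1}$ by setting $\rho_{\tau}(1)=0$, which is consistent with the lower limit of integration in Eq.\eqref{thm3}. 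Composing these extensions sends the boundary line $\chi=0$, equivalently $\sigma=1$, bijectively onto $\rho=0$, so that $H=G\circ F^{-1}$ extends to a bijection on $(0,\infty)\times[0,\infty)$, completing the proof.
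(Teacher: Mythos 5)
Your proposal is correct and follows essentially the same route as the paper: the paper also obtains the corollary by composing $G$ with $F^{-1}$, citing Lemma \ref{lem1} for $F$ and asserting that $G$ is a diffeomorphism with nonvanishing Jacobian, then extending via the bijection $F:U_{1}\to(0,\infty)\times[0,\infty)$. The only difference is that you spell out the triangular-Jacobian and injectivity argument for $G$ (including the joint-smoothness point), which the paper states without proof.
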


\begin{remark}\label{notation} Using the notation of 
Corollary \ref{affinegeodesic} we may write $Y_{\tau}(\rho)=H^{-1}(\tau, \rho)$.
\end{remark}

\begin{remark}\label{inflation}
The condition $\ddot{a}(t)\leq0$ for all $t$ for a Robertson-Walker space-time to be non inflationary is equivalent to $\ddot{b}(t)\geq0$ for all $t$.
\end{remark}

\noindent The previous corollary guarantees that $\{\tau,\rho,\theta,\varphi \}$ are the coordinates of a smooth chart for the Robertson-Walker metric given by Eq.\eqref{frwmetric}.

\begin{theorem}\label{polarform}
Let $a(t)$ be a smooth, increasing, unbounded function on $(0,\infty)$ with inverse function $b(t)$ satisfying $\ddot{b}(t)\geq0$ for all $t>0$. 
In $\{\tau,\rho,\theta,\varphi \}$ coordinates the metric of Eq.\eqref{frwmetric} is given by,

\begin{equation}\label{fermipolar}
ds^2=g_{\tau\tau} d\tau^2+d\rho^2 + \frac{a^2(\tau)}{\sigma(\rho)}S^2_k(\chi(\tau,\sigma(\rho)))d\Omega^2,
\end{equation}
where,
\begin{equation}\label{gtautau}
g_{\tau\tau}=-(\dot{a}(\tau))^{2}\left(\dot{b}\left(\frac{a(\tau)}{\sqrt{\sigma(\rho)}}\right)+a(\tau)\frac{\sqrt{\sigma(\rho)-1}}{2\sqrt{\sigma(\rho)}}\int_1^{\sigma(\rho)}\frac{\ddot{b}\left(\frac{a(\tau)}{\sqrt{\tilde{\sigma}}}\right)}{\tilde{\sigma}\sqrt{\tilde{\sigma}-1}}d\tilde{\sigma}\right)^2,
\end{equation}
and where $\sigma(\rho)$ and $\chi(\tau,\sigma(\rho))$ are given by Eqs. \eqref{inverse} and \eqref{thm2}.
\end{theorem}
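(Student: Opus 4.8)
The plan is to obtain the metric in the new coordinates by a direct change of variables from the $(t,\chi,\theta,\varphi)$ form of Eq.\eqref{frwmetric}. Since the transformation supplied by Corollary \ref{tau,s} leaves $\theta$ and $\varphi$ fixed and expresses $t$ and $\chi$ as functions of $(\tau,\rho)$ alone, the Jacobian of $(t,\chi,\theta,\varphi)$ with respect to $(\tau,\rho,\theta,\varphi)$ is block diagonal. Hence the metric in the new coordinates has no cross terms coupling $\{\tau,\rho\}$ to $\{\theta,\varphi\}$, and the problem splits into computing the $2\times 2$ block in $\{\tau,\rho\}$ and transforming the angular part. For the angular part, Eq.\eqref{parameter} gives $a^2(t)=a^2(\tau)/\sigma$, so the coefficient $a^2(t)S_k^2(\chi)$ of $d\Omega^2$ becomes immediately $\tfrac{a^2(\tau)}{\sigma(\rho)}S_k^2(\chi(\tau,\sigma(\rho)))$, as claimed. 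It remains to produce $g_{\rho\rho}=1$, $g_{\tau\rho}=0$, and the expression \eqref{gtautau} for $g_{\tau\tau}$.

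Two of these are geometric. First, for fixed $\tau$ the curve $\rho\mapsto Y_\tau(\rho)$ is, by Corollary \ref{affinegeodesic}, the spacelike geodesic orthogonal to $\gamma$ parametrized by arc length; thus $\partial_\rho$ is its unit tangent and $g_{\rho\rho}=g(\partial_\rho,\partial_\rho)=1$ everywhere. (One may also verify this directly from Eq.\eqref{X} together with $\sigma=(a_0/a)^2$.) Second, to establish $g_{\tau\rho}=0$ I would run the standard Gauss-lemma argument for a geodesic congruence. Because $\partial_\tau$ and $\partial_\rho$ are coordinate fields they commute, so $\nabla_{\partial_\rho}\partial_\tau=\nabla_{\partial_\tau}\partial_\rho$, and because each $Y_\tau$ is an affinely parametrized geodesic, $\nabla_{\partial_\rho}\partial_\rho=0$. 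Metric compatibility then gives
\[
\partial_\rho\, g(\partial_\tau,\partial_\rho)=g(\nabla_{\partial_\rho}\partial_\tau,\partial_\rho)=g(\nabla_{\partial_\tau}\partial_\rho,\partial_\rho)=\tfrac{1}{2}\,\partial_\tau\, g(\partial_\rho,\partial_\rho)=0,
\]
so $g(\partial_\tau,\partial_\rho)$ is constant in $\rho$ along each geodesic. At $\rho=0$ the two fields reduce to $\gamma'(\tau)$ and $X$, which are orthogonal by the defining condition $C=a(\tau)$ recorded after Eq.\eqref{tdot}; hence $g_{\tau\rho}\equiv 0$.

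For $g_{\tau\tau}$ the cleanest route avoids computing $\partial t/\partial\tau$ and $\partial\chi/\partial\tau$ (which would require the implicit derivative $\partial\sigma/\partial\tau$ at fixed $\rho$) and instead uses determinants. With $g_{\rho\rho}=1$ and $g_{\tau\rho}=0$ already in hand, the $\{\tau,\rho\}$ block has determinant $g_{\tau\tau}$, while the same block of Eq.\eqref{frwmetric2} has determinant $-a^2(t)=-a^2(\tau)/\sigma$. Writing $M$ for the Jacobian matrix of $(t,\chi)$ with respect to $(\tau,\rho)$ and factoring $M$ through $(\tau,\sigma)$, one gets $\det M=J(\tau,\sigma)/\rho_\tau'(\sigma)$, where $J$ is the Jacobian of Eq.\eqref{jacobian} and $\rho_\tau'(\sigma)=\tfrac{a(\tau)}{2}\dot b(a(\tau)/\sqrt\sigma)\,\sigma^{-3/2}(\sigma-1)^{-1/2}$ is the $\sigma$-derivative of Eq.\eqref{thm3}. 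Then $g_{\tau\tau}=(\det M)^2\,(-a^2(\tau)/\sigma)$. The main obstacle is purely algebraic: substituting \eqref{jacobian} and the derivative of \eqref{thm3}, the common factor $\dot b(a(\tau)/\sqrt\sigma)$ cancels, and after absorbing the prefactor $a^2(\tau)/\sigma$ the square collapses exactly to the bracket in Eq.\eqref{gtautau}, yielding $g_{\tau\tau}=-\dot a^2(\tau)\big(\dot b(a(\tau)/\sqrt\sigma)+\tfrac{a(\tau)\sqrt{\sigma-1}}{2\sqrt\sigma}\int_1^{\sigma}\ddot b(a(\tau)/\sqrt{\tilde\sigma})\,\tilde\sigma^{-1}(\tilde\sigma-1)^{-1/2}\,d\tilde\sigma\big)^2$. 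Keeping track of the arc-length normalization and of the two distinct appearances of $\dot b$ (evaluated at $a(\tau)/\sqrt\sigma$ versus inside the integral at $a(\tau)/\sqrt{\tilde\sigma}$) is the only delicate point in the simplification.
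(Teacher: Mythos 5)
Your proposal is correct, and it proves the theorem by a genuinely different route than the paper. The paper's own proof is a brute-force chain-rule computation: it assembles the Jacobian $D_{H^{-1}}=D_F D_G^{-1}$ entry by entry (Eqs.\eqref{dtdtau}--\eqref{dF2dsigma}), contracts it against the Robertson-Walker metric to get long expressions for $g_{\tau\tau}$ and $g_{\tau\rho}$, and then simplifies each by an integration by parts --- in particular, the vanishing of $g_{\tau\rho}$ appears there as a seemingly fortuitous cancellation. You instead get $g_{\rho\rho}=1$ from the arc-length normalization of $Y_\tau$ (the paper mentions this only as an aside), get $g_{\tau\rho}\equiv 0$ from the generalized Gauss lemma (commuting coordinate fields, $\nabla_{\partial_\rho}\partial_\rho=0$, constancy of $g(\partial_\tau,\partial_\rho)$ in $\rho$, and orthogonality at $\rho=0$ coming from $C=a(\tau)$), and then extract $g_{\tau\tau}$ from the determinant identity $g_{\tau\tau}=-(\det M)^2 a^2(\tau)/\sigma$ with $\det M=J(\tau,\sigma)/\rho_\tau'(\sigma)$; I checked that this ratio does collapse to $\frac{\dot a(\tau)\sqrt{\sigma}}{a(\tau)}\bigl(\dot b(a(\tau)/\sqrt{\sigma})+\tfrac{a(\tau)\sqrt{\sigma-1}}{2\sqrt{\sigma}}\int_1^\sigma\ddot b(a(\tau)/\sqrt{\tilde\sigma})\,\tilde\sigma^{-1}(\tilde\sigma-1)^{-1/2}d\tilde\sigma\bigr)$, whose square times $a^2(\tau)/\sigma$ reproduces Eq.\eqref{gtautau} exactly. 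Your approach buys conceptual clarity (the off-diagonal term vanishes for a geometric reason, not by accident), reuses the Jacobian $J$ already computed in Lemma \ref{lem1}, and avoids both integrations by parts; the paper's approach buys the explicit partial derivatives $\partial t/\partial\tau$, $\partial\chi/\partial\tau$, etc., which it needs again later (see Remark \ref{referee} and the proof of Theorem \ref{thmfs}). Two small points you should make explicit: first, your geodesic arguments must run on the two-dimensional submanifold $\mathcal{M}_{\theta_0,\varphi_0}$ using Corollary \ref{affinegeodesic}, since the statement that $(\tau,\rho,\theta_0,\varphi_0)$ is a geodesic of the full spacetime (Corollary \ref{polargeodesic}) is derived only after this theorem --- the block-diagonal structure you established makes this restriction legitimate and avoids circularity; second, the evaluation of $g(\partial_\tau,\partial_\rho)$ at $\rho=0$ requires noting that the coordinate fields extend continuously to the axis (smooth dependence of geodesics on initial data), which is the usual boundary step in the Gauss lemma.
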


\begin{proof}
By the chain rule, the derivative of $H^{-1}$ is given by $D_{H^{-1}} = D_F D_{G}^{-1}$, i.e.,
\begin{eqnarray}
 \Large\begin{pmatrix}\label{jacob2}
 \frac{\partial t}{\partial \tau} & \frac{\partial t}{\partial \rho} \\
\frac{\partial \chi}{\partial \tau} & \frac{\partial \chi}{\partial \rho}
\end{pmatrix}&
=&
\Large\begin{pmatrix}
 \frac{\partial F_1}{\partial \tau} & \frac{\partial F_1}{\partial \sigma} \\
\frac{\partial F_2}{\partial \tau} & \frac{\partial F_2}{\partial \sigma}
\end{pmatrix}
\begin{pmatrix}
 \frac{\partial G_1}{\partial \tau} & \frac{\partial G_1}{\partial \sigma} \\
\frac{\partial G_2}{\partial \tau} & \frac{\partial G_2}{\partial \sigma}
\end{pmatrix}^{-1}.
\end{eqnarray}
The second column of $D_{H^{-1}}$ is given directly by Eqs.\eqref{tsigma} and \eqref{dsigmadchi}. The entries in the first column are,
\begin{equation}\label{dtdtau}
\frac{\partial t}{\partial \tau}=\frac{\partial F_1}{\partial \tau}+D_{G^{-1}}^{2 1}\frac{\partial F_1}{\partial \sigma}
\end{equation}
and
\begin{equation}\label{dchidtau}
\frac{\partial \chi}{\partial \tau}=\frac{\partial F_2}{\partial \tau}+D_{G^{-1}}^{2 1}\frac{\partial F_2}{\partial \sigma},
\end{equation}
where $D_{G^{-1}}^{2 1}$ is the $(2,1)$ entry of $D_{G}^{-1}$. Inverting the matrix $D_{G}$ yields,
\begin{equation}\label{D21G}
D_{G^{-1}}^{2 1}=-\frac{\dot{a}(\tau)\sigma^{3/2}(\rho)\sqrt{\sigma(\rho)-1}}{a(\tau)\dot{b}\Big(\frac{a(\tau)}{\sqrt{\sigma(\rho)}}\Big)}\int_1^{\sigma(\rho)}\frac{\dot{b}\left(\frac{a(\tau)}{\sqrt{\tilde{\sigma}}}\right)\sqrt{\tilde{\sigma}}+\ddot{b}\left(\frac{a(\tau)}{\sqrt{\tilde{\sigma}}}\right)a(\tau)}{\tilde{\sigma}^2\sqrt{\tilde{\sigma}-1}}d\tilde{\sigma}.
\end{equation}
Using Eqs.\eqref{Function}, \eqref{thm1}, and \eqref{thm2}, the entries of $D_F$ may be calculated directly and are given by,
\begin{eqnarray}\label{dF1dtau}
\frac{\partial F_1}{\partial \tau}&=&\dot{b}\left(\frac{a(\tau)}{\sqrt{\sigma(\rho)}}\right)\frac{\dot{a}(\tau)}{\sqrt{\sigma(\rho)}},\\\label{dF1dsigma}
\frac{\partial F_1}{\partial \sigma}&=&-\dot{b}\left(\frac{a(\tau)}{\sqrt{\sigma(\rho)}}\right)\frac{a(\tau)}{2\sigma^{3/2}(\rho)},\\\label{dF2dtau}
\frac{\partial F_2}{\partial \tau}&=&\frac{1}{2}\int_1^{\sigma(\rho)}\ddot{b}\left(\frac{a(\tau)}{\sqrt{\tilde{\sigma}}}\right)\frac{\dot{a}(\tau)}{\tilde{\sigma}\sqrt{\tilde{\sigma}-1}}d\tilde{\sigma}
\end{eqnarray}
and
\begin{equation}\label{dF2dsigma}
\frac{\partial F_2}{\partial \sigma}=\frac{\dot{b}\left(\frac{a(\tau)}{\sqrt{\sigma(\rho)}}\right)}{2\sqrt{\sigma(\rho)}\sqrt{\sigma(\rho)-1}}.
\end{equation}
Using the definition of $\sigma$ given in Eq.\eqref{parameter}, the first two metric components of Eq.\eqref{frwmetric} may be expressed as functions of $\tau$ and $\rho$ as follows, 
\begin{eqnarray}\label{oldcomponents}
g_{tt}=-1 & \mbox{and} & g_{\chi \chi}=a(t(\tau, \rho))=\frac{a^2(\tau)}{\sigma(\rho)}.
\end{eqnarray}
In what follows, let $\{x^{0},x^{1},x^{2},x^{3}\}$ denote $\{\tau,\rho,\theta,\varphi \}$. Then using Eqs.\eqref{tsigma}, \eqref{dsigmadchi} and \eqref{oldcomponents} the coefficient $g_{\rho \rho}$ of $d\rho^{2}$ in the metric tensor is given by,
\begin{equation}
g_{\rho \rho}=g_{\alpha \beta}\frac{\partial x^{\alpha}}{\partial \rho}\frac{\partial x^{\beta}}{\partial \rho}=1,
\end{equation}
which may also be deduced by noting that the tangent vector $\partial/\partial\rho$ of the geodesic $Y_{\tau}(\rho)$ has unit length (see Corollary \ref{affinegeodesic}).
Similarly, a calculation using Eqs.\eqref{dtdtau},\eqref{D21G},\eqref{dF1dtau},\eqref{dF1dsigma} and \eqref{oldcomponents} gives,

\begin{multline}\label{long}
\indent\indent\indent\indent\indent\indent g_{\tau\tau}=g_{\alpha \beta}\frac{\partial x^{\alpha}}{\partial \tau}\frac{\partial x^{\beta}}{\partial \tau}=\frac{\dot{a}^2(\tau)}{4\sigma}\times\\
\Bigg[a^2(\tau)\left(\int_1^{\sigma}\frac{\ddot{b}\left(\frac{a(\tau)}{\sqrt{\tilde{\sigma}}}\right)}{\tilde{\sigma}\sqrt{\tilde{\sigma}-1}}d\tilde{\sigma}-\sigma\int_1^{\sigma}\frac{\ddot{b}\left(\frac{a(\tau)}{\sqrt{\tilde{\sigma}}}\right)}{\tilde{\sigma}^2\sqrt{\tilde{\sigma}-1}}d\tilde{\sigma}-\frac{\sigma}{a(\tau)}\int_1^{\sigma}\frac{\dot{b}\left(\frac{a(\tau)}{\sqrt{\tilde{\sigma}}}\right)}{\tilde{\sigma}^{3/2}\sqrt{\tilde{\sigma}-1}}d\tilde{\sigma}\right)^2\\
-\left(2\dot{b}\left(\frac{a(\tau)}{\sqrt{\tilde{\sigma}}}\right)+\sqrt{\sigma}\sqrt{\sigma-1}\int_1^{\sigma}\frac{\dot{b}\left(\frac{a(\tau)}{\sqrt{\tilde{\sigma}}}\right)\sqrt{\tilde{\sigma}}+\ddot{b}\left(\frac{a(\tau)}{\sqrt{\tilde{\sigma}}}\right)a(\tau)}{\tilde{\sigma}^2\sqrt{\tilde{\sigma}-1}}d\tilde{\sigma}\right)^2\Bigg].
\end{multline}
Applying integration by parts to the integral,
\begin{equation}
\int_1^{\sigma}\frac{\dot{b}\left(\frac{a(\tau)}{\sqrt{\tilde{\sigma}}}\right)}{\tilde{\sigma}^{3/2}\sqrt{\tilde{\sigma}-1}}d\tilde{\sigma},
\end{equation}
results in simplification of Eq.\eqref{long} and yields Eq.\eqref{gtautau}. For the off-diagonal components, a calculation using Eqs.\eqref{dchidtau}, \eqref{D21G}, \eqref{dF2dtau}, \eqref{dF2dsigma} and \eqref{oldcomponents} results in,

\begin{multline}\label{gtaurho}
g_{\tau \rho}=g_{\alpha \beta}\frac{\partial x^{\alpha}}{\partial \tau}\frac{\partial x^{\beta}}{\partial \rho}
=\dot{a}(\tau)\Bigg(\sqrt{\frac{\sigma-1}{\sigma}}\dot{b}\left(\frac{a(\tau)}{\sqrt{\tilde{\sigma}}}\right)+\frac{a(\tau)}{2}\int_1^{\sigma}\frac{\ddot{b}\left(\frac{a(\tau)}{\sqrt{\tilde{\sigma}}}\right)}{\tilde{\sigma}\sqrt{\tilde{\sigma}-1}}d\tilde{\sigma}\\
-\frac{1}{2}\int_1^{\sigma}\frac{\dot{b}\left(\frac{a(\tau)}{\sqrt{\tilde{\sigma}}}\right)\sqrt{\tilde{\sigma}}+\ddot{b}\left(\frac{a(\tau)}{\sqrt{\tilde{\sigma}}}\right)a(\tau)}{\tilde{\sigma}^2\sqrt{\tilde{\sigma}-1}}d\tilde{\sigma}\Bigg).
\end{multline}
Analogous simplification using integration by parts shows that the right hand side of Eq.\eqref{gtaurho} is identically zero.
\end{proof}

\noindent The following corollary is used in the proof of Theorem \ref{fermi}.

\begin{corollary}\label{polargeodesic}
With the same assumptions as in Theorem \ref{polarform} and using  $\{\tau,\rho,\theta,\varphi \}$ coordinates, for fixed $\theta_{0},\varphi_{0}$, and $\tau>0$, the path $Y_{\tau}(\rho)=(\tau,\rho,\theta_{0},\varphi_{0})$ is a spacelike geodesic with parameter $\rho>0$.
 \end{corollary}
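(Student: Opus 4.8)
The plan is to verify the geodesic equation directly in the $\{\tau,\rho,\theta,\varphi\}$ chart, exploiting the particularly simple form of the metric established in Theorem \ref{polarform}. Writing $\{x^0,x^1,x^2,x^3\}=\{\tau,\rho,\theta,\varphi\}$, the candidate curve $Y_\tau(\rho)=(\tau,\rho,\theta_0,\varphi_0)$ has velocity $dx^\mu/d\rho=\delta^\mu_1$ (only the $\rho$-component is nonzero, and it equals $1$) and vanishing acceleration $d^2x^\mu/d\rho^2=0$. Hence the geodesic equation $\ddot x^\mu+\Gamma^\mu_{\alpha\beta}\dot x^\alpha\dot x^\beta=0$ collapses to the single family of conditions $\Gamma^\mu_{\rho\rho}=0$ for $\mu=0,1,2,3$, and the whole problem reduces to computing these four Christoffel symbols.

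First I would record from Eq.\eqref{fermipolar} that in these coordinates the metric is \emph{diagonal}, with $g_{\rho\rho}\equiv 1$ and with every cross term involving $\rho$ vanishing; in particular $g_{\tau\rho}=0$, which is exactly the content of the last line of the proof of Theorem \ref{polarform}, while the angular block contributes no $d\rho$-cross terms, so $g_{\rho\theta}=g_{\rho\varphi}=0$ as well. The relevant symbols then read $\Gamma^\mu_{\rho\rho}=\tfrac12 g^{\mu\nu}\bigl(2\,\partial_\rho g_{\nu\rho}-\partial_\nu g_{\rho\rho}\bigr)$. Since $g_{\rho\rho}\equiv 1$ is constant, the second term vanishes for every $\nu$; and since $g_{\nu\rho}=\delta_{\nu\rho}$ (equal to $1$ when $\nu=\rho$ and $0$ otherwise), the first term vanishes too. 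Thus $\Gamma^\mu_{\rho\rho}\equiv 0$ throughout the chart, so the coordinate curve of the stated form solves the geodesic equation. Observe in particular that $\Gamma^\tau_{\rho\rho}=\Gamma^\theta_{\rho\rho}=\Gamma^\varphi_{\rho\rho}=0$ guarantees that $\tau$, $\theta$, and $\varphi$ genuinely remain constant along the solution, so the curve does not leave the slice $\theta=\theta_0$, $\varphi=\varphi_0$.

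To finish, I would note that $g(\partial_\rho,\partial_\rho)=g_{\rho\rho}=1>0$, so the tangent vector is spacelike and of unit length; hence the curve is a spacelike geodesic and $\rho$ is arc length. As a cross-check and alternative route, one may instead invoke invariance of geodesics under change of coordinates: by Remark \ref{notation} the curve $(\tau,\rho,\theta_0,\varphi_0)$ is precisely $H^{-1}(\tau,\rho)=Y_\tau(\rho)$, which Corollary \ref{affinegeodesic} already identifies as an arc-length–parametrized spacelike geodesic of the submanifold $\mathcal{M}_{\theta_0,\varphi_0}$, and that submanifold is totally geodesic (being a connected component of the fixed-point set of the rotation by $\pi$ about the radial axis through $(\theta_0,\varphi_0)$), so $Y_\tau$ is a geodesic of the full spacetime as well. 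The honest remark is that there is no genuine obstacle here: all of the analytic difficulty has already been absorbed into Theorem \ref{polarform}, and specifically into the normalization $g_{\rho\rho}\equiv 1$ and the vanishing of $g_{\tau\rho}$; once those are in hand, the corollary is immediate.
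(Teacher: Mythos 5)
Your proposal is correct, and its primary route is genuinely different from the paper's. The paper computes no Christoffel symbols: it observes that $\frac{\partial}{\partial\rho}=\frac{\partial t}{\partial\rho}\frac{\partial}{\partial t}+\frac{\partial\chi}{\partial\rho}\frac{\partial}{\partial\chi}$ is precisely the tangent field of the geodesic $Y_{\tau}(\rho)$ of Corollary \ref{affinegeodesic}, and that the coordinate ray and $Y_{\tau}$ share initial data at $\gamma(\tau)$, so the coordinate ray \emph{is} that geodesic re-expressed in the chart $\{\tau,\rho,\theta,\varphi\}$; constancy of the angular coordinates is then settled ``by symmetry or by solving the geodesic equations directly,'' which is exactly the totally geodesic point your cross-check makes precise via the fixed-point-set-of-an-isometry argument. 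Your main argument instead verifies the geodesic equation head-on for the four-dimensional metric of Eq.\eqref{fermipolar}: diagonality plus $g_{\rho\rho}\equiv1$ give $\Gamma^{\mu}_{\rho\rho}\equiv0$ for all $\mu$, so the coordinate ray is a unit-speed spacelike geodesic of $\mathcal{M}$, with constancy of $\theta$ and $\varphi$ obtained automatically ($\Gamma^{\theta}_{\rho\rho}=\Gamma^{\varphi}_{\rho\rho}=0$) rather than by a separate symmetry argument. This is self-contained given only the statement of Theorem \ref{polarform}, and it is not circular, since the metric components there are established by direct chain-rule computation (the remark in that proof that $g_{\rho\rho}=1$ ``may also be deduced'' from Corollary \ref{affinegeodesic} is an aside, not the derivation). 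The one thing the paper's route delivers that the Christoffel computation alone does not is the identification of the coordinate ray with the \emph{specific} geodesic $Y_{\tau}(\rho)$ constructed in Sect.\ 2 --- the content implicit in the corollary's notation and used later, e.g.\ in Eq.\eqref{YinFermi}; your proposal recovers this through Remark \ref{notation} in the cross-check, so nothing essential is missing.
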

 \begin{proof}
 We first express the geodesic  $Y_{\tau}(\rho)$ on $\mathcal{M}_{\theta_0,\varphi_0}$ given in Corollary \ref{affinegeodesic} in terms of $\tau, \rho$ coordinates.  Since 
\begin{equation}
 \frac{\partial}{\partial\rho}=\frac{\partial t}{\partial\rho}\frac{\partial}{\partial t}+\frac{\partial\chi}{\partial\rho}\frac{\partial}{\partial\chi},
 \end{equation}
 the tangent vector field along $Y_{\tau}(\rho)$ coincides with the tangent vector field along the path $(\tau, \rho)$ with parameter $\rho\geq0$ and $\tau$ fixed. Since these paths have the same initial data at $\gamma(\tau)$, they represent the same geodesic in the two respective coordinate systems.\\
 
\noindent  It then follows that the function $Y_{\tau}(\rho)=(\tau,\rho,\theta_{0},\varphi_{0})$ on $\mathcal{M}$ is a spacelike geodesic. That the angular coordinates are constant follows from symmetry or by solving the geodesic equations directly.
 \end{proof}

\noindent We refer to the coordinates, $\{\tau,\rho,\theta,\varphi \}$, of Theorem  \ref{polarform} as  Fermi polar coordinates.  The terminology is justified by the following theorem.

\begin{theorem}\label{fermi}
Let $a(t)$ be a smooth, increasing, unbounded function on $(0,\infty)$ with inverse function $b(t)$ satisfying $\ddot{b}(t)\geq0$ for all $t>0$. Define $x^{0}=\tau$, $x=\rho \sin\theta \cos\varphi$, $y=\rho \sin\theta \sin\varphi$, $z=\rho \cos\theta$. Then the coordinates $\{\tau,x,y,z\}$ may be extended to a chart that includes the path $\gamma(\tau)=(\tau,0,0,0)$ and $\{\partial/\partial\tau,$ $\partial/\partial x, \partial/\partial y, \partial/\partial z\}$ is a parallel tetrad along $\gamma(\tau)$.  With respect to this tetrad, $\tau,x,y,z$ are global Fermi coordinates for the observer $\gamma(\tau)$.  Expressed in these Fermi coordinates, the metric of Eq.\eqref{fermipolar} is given by,

\begin{equation}
\begin{split}\label{fermimetric}
ds^2=&\,g_{\tau\tau} d\tau^2+dx^2 +dy^2+dz^2\\ 
+&\lambda_{k}(\tau,\rho)\big[(y^2+z^2)dx^2+(x^2+z^2)dy^2+(x^2+y^2)dz^2\\
-&xy(dxdy+dydx)-xz(dxdz+dzdx)-yz(dydz+dzdy)\big],
\end{split}
\end{equation}
where $g_{\tau\tau}$ is given by Eq.\eqref{gtautau}, $\rho=\sqrt{x^2+y^2+z^2}$, and,

\begin{equation}\label{lambda}
\rho^4 \lambda_{k}(\tau,\rho) = \frac{a^2(\tau)}{\sigma(\rho)}S^2_k(\chi(\tau,\sigma(\rho)))-\rho^2.
\end{equation}
The smooth function $\lambda_{k}(\tau,\rho)$ is a function of $\tau$ and $\rho^2$, and the notation in Eq.\eqref{lambda} is the same as in Theorem \ref{general}.
\end{theorem}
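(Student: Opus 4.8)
The plan is to obtain the metric form \eqref{fermimetric} by a purely algebraic change from the spherical-polar coordinates of Theorem \ref{polarform} to the Cartesian coordinates $(x,y,z)$, and then to carry out the genuine analytic work, which is to show that the resulting coefficients extend smoothly across the axis $\rho=0$. Writing $R^2(\tau,\rho)\equiv a^2(\tau)S_k^2(\chi(\tau,\sigma(\rho)))/\sigma(\rho)$ for the coefficient of $d\Omega^2$ in \eqref{fermipolar}, I would split the angular part as $R^2\,d\Omega^2 = \rho^2\,d\Omega^2 + (R^2-\rho^2)\,d\Omega^2$, use the Euclidean identities $d\rho^2+\rho^2\,d\Omega^2 = dx^2+dy^2+dz^2$ and $\rho\,d\rho = x\,dx+y\,dy+z\,dz$, and substitute the definition $\rho^4\lambda_k = R^2-\rho^2$. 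Expanding $\lambda_k[\rho^2(dx^2+dy^2+dz^2)-(x\,dx+y\,dy+z\,dz)^2]$ then reproduces the bracketed terms of \eqref{fermimetric} term by term; since Theorem \ref{polarform} already gives $g_{\tau\rho}=0$ and the polar metric has no $\tau$--angular cross terms, the mixed components $g_{\tau x},g_{\tau y},g_{\tau z}$ vanish identically and $g_{\tau\tau}$ carries over unchanged. This part is routine.

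The crux is the claim that $\lambda_k$ is a \emph{smooth} function of $\tau$ and $\rho^2$, which is precisely what removes the apparent singularity at $\rho=0$. Here I would introduce the substitution $\tilde\sigma = 1+u^2$ in the integral representations \eqref{thm2} and \eqref{thm3}. Setting $w=\sqrt{\sigma-1}$, this recasts both $\chi$ and $\rho$ as integrals of \emph{even} functions of $u$ over $[0,w]$, so that $\chi$ and $\rho$ are odd smooth functions of $w$ for each fixed $\tau$, with $\rho'(w)>0$ and $\rho'(0)=a(\tau)\dot b(a(\tau))=a(\tau)/\dot a(\tau)\neq 0$. Inverting, $w$ is an odd smooth function of $\rho$, hence $\sigma-1=w^2$ is a smooth function of $\rho^2$; and since $S_k$ is in every case ($k=0,\pm1$) an odd analytic function, $S_k^2(\chi)$ and therefore $R^2=a^2\sigma^{-1}S_k^2(\chi)$ are smooth functions of $\tau$ and $\rho^2$.

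It then remains to show $R^2-\rho^2=O(\rho^4)$ so that the quotient $\lambda_k=(R^2-\rho^2)/\rho^4$ is finite and smooth at the axis; this is the main obstacle. I expect to handle it by comparing leading coefficients: a direct expansion gives $\chi'(0)=\dot b(a(\tau))=1/\dot a(\tau)$, so $S_k(\chi)\sim\chi\sim w/\dot a(\tau)$ and $R^2\sim (a(\tau)/\dot a(\tau))^2 w^2$, while $\rho\sim(a(\tau)/\dot a(\tau))w$ gives $\rho^2\sim(a(\tau)/\dot a(\tau))^2 w^2$ with the \emph{same} leading coefficient. Thus, as smooth functions of $s=w^2=\sigma-1$, the difference $R^2-\rho^2$ vanishes to second order in $s$; since $\rho^2$ is itself a smooth function of $s$ with nonzero first derivative at $s=0$, it may be used as a coordinate in place of $s$, whereupon $R^2-\rho^2=\rho^4\,\psi(\tau,\rho^2)$ for a smooth $\psi$, so that $\lambda_k=\psi$ is smooth in $\tau$ and $\rho^2$. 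The same substitution shows that $g_{\tau\tau}$ in \eqref{gtautau} is smooth in $(\tau,\rho^2)$ with $g_{\tau\tau}\to -1$ as $\rho\to0$; consequently every coefficient in \eqref{fermimetric} is smooth across $\rho=0$, the $(\tau,x,y,z)$ chart extends over $\gamma$, and on $\gamma$ the metric equals $\mathrm{diag}(-1,1,1,1)$.

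Finally I would verify the parallel-transport and Fermi properties. Because each coefficient is a function of $\tau$ and $\rho^2$ (times $x,y,z$ factors that are at least quadratic in the off-diagonal terms), all first partials $\partial_\mu g_{\alpha\beta}$ vanish on $\gamma$: the spatial derivatives pick up a factor of $x,y,$ or $z$, while the $\tau$-derivatives act on quantities that are constant ($-1$ or $\delta_{ij}$) along $\gamma$. With $g|_\gamma=\eta$ this forces all Christoffel symbols to vanish on $\gamma$, so the coordinate frame $\{\partial_\tau,\partial_x,\partial_y,\partial_z\}$ is parallel along $\gamma$ and $\gamma$ is a geodesic. To identify the chart with Fermi coordinates in the sense of \eqref{F2}, I would invoke Corollary \ref{polargeodesic}: each ray $\rho\mapsto(\tau,\rho,\theta_0,\varphi_0)$ is the unit-speed spacelike geodesic orthogonal to $\gamma$ at $\gamma(\tau)$, which in Cartesian coordinates is the straight ray $x^j=\rho\,n^j$ with $n$ the unit direction fixed by $(\theta_0,\varphi_0)$; hence the point at arc length $\rho$ is $\exp_{\gamma(\tau)}(\rho\,n^j e_j(\tau))$ and carries coordinates $x^j=\rho\,n^j=\lambda^j$, which is exactly \eqref{F2}. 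Globalness ($U=\mathcal M$) then follows from the bijectivity of $F$ in Lemma \ref{lem1} and Corollary \ref{tau,s}, together with the smooth extension across the axis just established.
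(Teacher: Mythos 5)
Your proof is correct, and its skeleton coincides with the paper's: the same splitting of the angular term, $R^{2}\,d\Omega^{2}=\rho^{2}\,d\Omega^{2}+(R^{2}-\rho^{2})\,d\Omega^{2}$ with $R^{2}\equiv a^{2}(\tau)S^{2}_{k}(\chi)/\sigma(\rho)$, followed by the Euclidean identities; the same vanishing-first-derivatives argument giving the parallel tetrad; the same identification with Eq.\eqref{F2} via the straight rays coming from Corollary \ref{polargeodesic}; and globalness from Lemma \ref{lem1} and Corollary \ref{tau,s}. The one place where you genuinely part ways is the crux of the proof, the smoothness of $\lambda_{k}$ across the axis $\rho=0$. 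The paper obtains parity from Remark \ref{submanifold} (evenness of $\partial\chi/\partial\rho$ in $\rho$) and then, by repeated use of Eqs.\eqref{tsigma} and \eqref{sigmadot}, explicitly computes the first four $\rho$-derivatives of $\chi$ at the axis ($\dot{\chi}=1/a$, $\ddot{\chi}=0$, $\dddot{\chi}=2\dot{a}^{2}/a^{3}$, $\ddddot{\chi}=0$), checks that $Q_{k}=\rho^{4}\lambda_{k}$ and its first three $\rho$-derivatives vanish at $\rho=0$, and then invokes the Taylor expansion in powers of $\rho^{2}$. You instead make the parity structural through the substitution $\tilde{\sigma}=1+u^{2}$, which exhibits $\chi$ and $\rho$ as odd smooth functions of $w=\sqrt{\sigma-1}$; after that, only first-order Taylor data ($\chi'(0)=1/\dot{a}(\tau)$ and $\rho'(0)=a(\tau)/\dot{a}(\tau)$) is needed to conclude $R^{2}-\rho^{2}=O(s^{2})$ in $s=w^{2}$, and replacing $s$ by $\rho^{2}$ as a smooth coordinate (legitimate since $d\rho^{2}/ds\neq0$ at $s=0$) yields $R^{2}-\rho^{2}=\rho^{4}\psi$ with $\psi$ smooth in $(\tau,\rho^{2})$. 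What your variant buys: no second- through fourth-order derivative computations, and evenness comes for free from the integral representations \eqref{thm2}--\eqref{thm3} rather than from a symmetry argument. What the paper's variant buys: explicit low-order expansion coefficients of $\chi$ (hence of the metric) on the axis. Both versions ultimately rest on the same fact --- a smooth even function of $\rho$ is a smooth function of $\rho^{2}$ --- and both are complete.
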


\begin{proof}We begin by showing that the indicated transformation of coordinates applied to Eq.\eqref{fermipolar} results in Eq.\eqref{fermimetric}.  Using Eq.\eqref{dsigmadchi}, Eq.\eqref{fermipolar} may be rewritten as,

\begin{equation}\label{fermipolar2}
ds^2=g_{\tau\tau} d\tau^2+(d\rho^2 +\rho^{2}d\Omega^2)\\
+\left[a(\tau)\frac{S^2_k(\chi(\tau,\sigma(\rho)))}{\dot{\chi}}-\rho^{2}\right]d\Omega^2,
\end{equation}
where for convenience we take $\dot{\chi}\equiv\partial\chi/\partial\rho$. Applying the change of variables in the statement of the theorem results in,
\begin{equation}
\begin{split}\label{fermimetric2}
ds^2=&\,g_{\tau\tau} d\tau^2+dx^2 +dy^2+dz^2\\ 
+&\frac{Q_{k}(\tau,\rho)}{\rho^4}\big[(y^2+z^2)dx^2+(x^2+z^2)dy^2+(x^2+y^2)dz^2\\
-&xy(dxdy+dydx)-xz(dxdz+dzdx)-yz(dydz+dzdy)\big],
\end{split}
\end{equation}
where,

\begin{equation}\label{Q_{k}}
Q_{k}(\tau,\rho)=a(\tau)\frac{S^2_k(\chi(\tau,\sigma(\rho)))}{\dot{\chi}}-\rho^{2}.
\end{equation}
From Remark \ref{submanifold}, it follows that $\dot{\chi}$ is an even function of $\rho$, and thus $Q_{k}(\tau,\rho)$ has a smooth extension to an even function of $\rho$. Repeated use of Eqs.\eqref{tsigma} and \eqref{sigmadot} yields, $\chi(\tau,\sigma(0))=0$, $\dot{\chi}(\tau,\sigma(0))=a(\tau)^{-1}$, $\ddot{\chi}(\tau,\sigma(0))=0$, $\dddot{\chi}(\tau,\sigma(0))=2\dot{a}^2(\tau)/a^{3}(\tau)$, and  $\ddddot{\chi}(\tau,\sigma(0))=0$, where as above the overdots signify differentiation with respect to $\rho$.  Using these results, it follows by direct calculation that $Q_{k}(\tau,0)=0$ and each of the first three derivatives of $Q_{k}$ with respect to $\rho$ vanish when evaluated at $(\tau,0)$, for $k=1,0,-1$.  Thus, writing $Q_{k}(\tau,\rho)$ as a Taylor polynomial in powers of $\rho^2$, we have,

\begin{equation}\label{Q2}
Q_{k}(\tau,\rho)=\rho^{4}\lambda_{k}(\tau,\rho),
\end{equation}
where $\lambda_{k}(\tau,\rho)$ is smooth and a function of $\rho^2$, establishing Eq.\eqref{fermimetric}, which extends by continuity to the path $\gamma(\tau)=(\tau,0,0,0)$, where the metric is Minkowskian.  It now follows by calculation that all first derivatives with respect to $\tau,x,y$, or $z$ of the metric tensor components vanish on $\gamma(\tau)$, forcing the connection coefficients also to vanish on $\gamma(\tau)$. Thus, each of the vectors in the tetrad $\{\partial/\partial\tau,$ $\partial/\partial x, \partial/\partial y, \partial/\partial z\}$ is parallel along $\gamma(\tau)$.\\

\noindent Expressing the geodesic $Y_{\tau}(\rho)$ in Corollary \ref{polargeodesic} in terms of $\{\tau,x,y,z\}$ gives,

\begin{equation}\label{YinFermi}
Y_{\tau}(\rho)=(\tau,a^{1}\rho,a^{2}\rho,a^{3}\rho),
\end{equation}
where $a^{1}=\sin\theta_{0} \cos\varphi_{0}$, $a^{2}=\sin\theta_{0} \sin\varphi_{0}$, $a^{3}=\cos\theta_{0}$. The geodesic of Eq.\eqref{YinFermi} may be extended to $\gamma(t)$ and is orthogonal to $\gamma(t)$.  It now follows from Eq.\eqref{F2} that  $\tau,x,y,z$ are global Fermi coordinates for the observer $\gamma(\tau)$. 
\end{proof}

\noindent The following definition makes some of the notation in the introduction more precise and will be useful in what follows.

\begin{definition}\label{fermiobserver} 
We refer to the observer following the path $\gamma(\tau)=(\tau,0,0,0)$ given in the statement of Theorem \ref{fermi} as the \emph{Fermi observer}.  A test particle with fixed spatial  Robertson-Walker coordinates $\chi_{0},\theta_{0},\varphi_{0}$ and with world line $\gamma_{0}(\tau)=(\tau, \chi_{0}, \theta_{0},\varphi_{0})$ is said to be \emph{comoving}.  The Fermi observer is also defined to be comoving.
\end{definition}

\begin{remark}\label{referee} The unit tangent vector field, $\partial/\partial t$, of the fundamental (comoving) observers of Robertson-Walker cosmologies --- i.e., the direction of time in Robertson-Walker coordinates --- may be expressed in Fermi coordinates via,
\begin{equation}\label{d/dt} 
\frac{\partial}{\partial t}=\frac{\partial \tau}{\partial t}\frac{\partial}{\partial \tau}+\frac{\partial\rho}{\partial t}\frac{\partial}{\partial\rho}.
\end{equation}
The partial derivatives in Eq.\eqref{d/dt} are entries of the matrix $D_H(t, \chi) = D_G\circ D_{F^{-1}}(t,\chi)$ (see Eq.\eqref{H} and Corollary \ref{tau,s}) and may be found explicitly as integral expressions from Eqs.\eqref{dF1dtau} -- \eqref{dF2dsigma}, and Theorem \ref{general}.
\end{remark}

\begin{remark}\label{nonglobal}
The conclusions of Theorems \ref{polarform} and \ref{fermi} continue to hold even when the hypothesis that $\ddot{b}\geq0$ is violated, i.e., for inflationary cosmologies, but for non global charts.  The forms of the metric given by Eqs. \eqref{fermimetric} and \eqref{fermipolar} are valid in that case on some neighborhood of the Fermi observer's path $\gamma(t)$.
\end{remark}

\noindent \textbf{{\normalsize 4. Fermi relative velocities and the proper radius of $\mathcal{M}_{\tau}$}}\\

\noindent In this section we find a general bound for the finite proper radius of the Fermi space slice of $\tau$-simultaneous events, $\mathcal{M}_{\tau}$, and we obtain expressions for Fermi velocities of comoving test particles relative to the Fermi observer.  Exact results are given for the case that the scale factor has the form $a(t)=t^{\alpha}$ for some  $0< \alpha\leq1$. We begin with a definition. 

\begin{definition}\label{radiusMtau}
Define the proper radius, $\rho_{\mathcal{M}_{\tau}}$, of the Fermi space slice of $\tau$-simultaneous events, ${\mathcal{M}_{\tau}}$, by,

\begin{equation}\label{thm3'}
\rho_{\mathcal{M}_{\tau}}=\frac{a(\tau)}{2}\int_{1}^{\sigma_{\infty}(\tau)}\dot{b}\left(\frac{a(\tau)}{\sqrt{\sigma}}\right)\frac{1}{\sigma^{3/2}\sqrt{\sigma-1}}d\sigma,
\end{equation}
where $\sigma_{\infty}(\tau)$ is given by Eq.\eqref{range}.
\end{definition}

\noindent The Hubble parameter, $H$ is defined by,

\begin{equation}\label{hubbleparameter}
H = \frac{\dot{a}(\tau)}{a(\tau)}
\end{equation}
and the Hubble radius is defined to be $1/H$, i.e., the speed of light divided by $H$.

\begin{theorem}\label{radius}
Let $a(t)$ be a smooth, increasing, unbounded function on $(0,\infty)$ with inverse function $b(t)$ satisfying $\ddot{b}(t)\geq0$ for all $t>0$.  Then,
\begin{enumerate}
  \item[(a)] at proper time $\tau$ of the Fermi observer, the proper distance $\rho$ to any spacetime point along a geodesice on the space slice, $\mathcal{M}_{\tau}$, satisfies the inequality,

\begin{equation}
\rho < \rho_{\mathcal{M}_{\tau}}\leq\frac{1}{H},
\end{equation}
and $ \rho_{\mathcal{M}_{\tau}}$ is a monotone increasing function of time $\tau$.
 \item[(b)] synchronous time $t$ decreases to zero along any spacelike geodesic, $Y_{\tau}(\rho)$, orthogonal to the path of the Fermi observer at fixed proper time $\tau$, as the proper distance $\rho\rightarrow\rho_{\mathcal{M}_{\tau}}$, and $t$ is strictly decreasing as a function of $\rho$.
\end{enumerate}

\end{theorem}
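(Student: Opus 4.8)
The plan is to establish part (b) first, since it reduces to a monotonicity-plus-limit computation, and then to recast the defining integral \eqref{thm3'} for $\rho_{\mathcal{M}_{\tau}}$ into a form in which both the bound $\rho_{\mathcal{M}_{\tau}}\le 1/H$ and the monotonicity in $\tau$ become transparent.

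For part (b), I would recall from \eqref{thm1} that along the geodesic the synchronous time is $t(\tau,\sigma)=b(a(\tau)/\sqrt{\sigma})$. Since $a$ is increasing, its inverse $b$ is increasing, and $a(\tau)/\sqrt{\sigma}$ strictly decreases as $\sigma$ increases on $[1,\sigma_{\infty}(\tau))$; hence $t$ is a strictly decreasing function of $\sigma$. Because $\rho_{\tau}(\sigma)$ is a strictly increasing function of $\sigma$ (as noted after Theorem \ref{general}), composing with its inverse $\sigma(\rho)$ shows $t$ is strictly decreasing in $\rho$. For the limiting value as $\rho\to\rho_{\mathcal{M}_{\tau}}$, i.e. $\sigma\to\sigma_{\infty}(\tau)$, I would split into the two cases of \eqref{range}: if $a_{\text{inf}}=0$ then $\sigma_{\infty}=\infty$ and $a(\tau)/\sqrt{\sigma}\to 0$, while if $a_{\text{inf}}>0$ then $\sigma_{\infty}=(a(\tau)/a_{\text{inf}})^{2}$ and $a(\tau)/\sqrt{\sigma}\to a_{\text{inf}}$. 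In either case the argument of $b$ tends to $a_{\text{inf}}=\lim_{t\to 0^{+}}a(t)$, so $t=b(a(\tau)/\sqrt{\sigma})\to 0^{+}$, which is the assertion.

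For part (a), the key step is the change of variable $\sigma=(a(\tau)/a(t))^{2}$ in \eqref{thm3'}, which is exactly the substitution $t=t(\tau,\sigma)$ coming from \eqref{thm1}. Using $\dot b(a(t))=1/\dot a(t)$ from \eqref{thm4} and simplifying, the integrand collapses dramatically and yields the clean representation
\[
\rho_{\mathcal{M}_{\tau}}=\int_{0}^{\tau}\frac{a(t)}{\sqrt{a(\tau)^{2}-a(t)^{2}}}\,dt,
\]
the singularity at $t=\tau$ being integrable of type $(\tau-t)^{-1/2}$. The strict inequality $\rho<\rho_{\mathcal{M}_{\tau}}$ then follows because $\rho_{\tau}(\sigma)$ strictly increases to its limiting value $\rho_{\mathcal{M}_{\tau}}$ as $\sigma\to\sigma_{\infty}$, so every finite $\rho=\rho_{\tau}(\sigma)$ with $\sigma<\sigma_{\infty}$ lies strictly below it. For the bound $\rho_{\mathcal{M}_{\tau}}\le 1/H$ I would instead work from \eqref{thm3'} directly: since $\ddot b\ge 0$, $\dot b$ is nondecreasing, and as $a(\tau)/\sqrt{\sigma}\le a(\tau)$ for $\sigma\ge 1$ we get $\dot b(a(\tau)/\sqrt{\sigma})\le\dot b(a(\tau))=1/\dot a(\tau)$. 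Pulling this bound out of the integral and using the elementary evaluation $\int_{1}^{\infty}\sigma^{-3/2}(\sigma-1)^{-1/2}\,d\sigma=2$ (via $\sigma=\sec^{2}\phi$) gives $\rho_{\mathcal{M}_{\tau}}\le \frac{a(\tau)}{2\dot a(\tau)}\cdot 2=a(\tau)/\dot a(\tau)=1/H$.

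It remains to prove monotonicity in $\tau$, which I expect to be the main obstacle and which the clean representation above is designed to handle. Writing $a_{0}=a(\tau)$ and substituting $w=a(t)$ (so that $dt=\dot b(w)\,dw$) followed by $w=a_{0}s$ turns the representation into
\[
\rho_{\mathcal{M}_{\tau}}=\int_{a_{\text{inf}}/a_{0}}^{1}\frac{s\,\big(a_{0}\dot b(a_{0}s)\big)}{\sqrt{1-s^{2}}}\,ds ,
\]
so that the $\tau$-dependence enters only through $a_{0}$, which increases with $\tau$. For fixed $s$ the map $a_{0}\mapsto a_{0}\dot b(a_{0}s)$ has derivative $\dot b(a_{0}s)+a_{0}s\,\ddot b(a_{0}s)>0$ since $\dot b>0$ and $\ddot b\ge 0$; hence the integrand strictly increases with $a_{0}$ at each $s$, while the lower limit $a_{\text{inf}}/a_{0}$ simultaneously decreases, enlarging the positive domain of integration. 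Comparing the integrals for $\tau_{1}<\tau_{2}$ by splitting off the extra interval then gives $\rho_{\mathcal{M}_{\tau_{1}}}<\rho_{\mathcal{M}_{\tau_{2}}}$ with no need to differentiate under the integral sign or to control the endpoint singularity, since convergence of both integrals is already guaranteed by the bound just established.
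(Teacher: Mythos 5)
Your proposal is correct, and two of its three components coincide with the paper's own proof. The bound $\rho_{\mathcal{M}_{\tau}}\le 1/H$ is obtained exactly as in the paper: pull $\dot b\left(a(\tau)/\sqrt{\sigma}\right)\le\dot b(a(\tau))=1/\dot a(\tau)$ out of the integral and evaluate $\int_1^{\infty}\sigma^{-3/2}(\sigma-1)^{-1/2}\,d\sigma=2$. Part (b) is likewise the paper's argument: $t=b(a(\tau)/\sqrt{\sigma})$, the argument of $b$ tends to $a_{\text{inf}}$ as $\sigma\to\sigma_{\infty}(\tau)$, and $b(a_{\text{inf}}^{+})=0$ because $b:(a_{\text{inf}},\infty)\to(0,\infty)$ is increasing; strict decrease in $\rho$ follows in the paper from Eq.\eqref{tsigma}, in yours from composing monotone functions --- a cosmetic difference. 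Where you genuinely diverge is the monotonicity of $\rho_{\mathcal{M}_{\tau}}$ in $\tau$. You substitute $t=b(a(\tau)/\sqrt{\sigma})$ into the defining integral and rescale, obtaining
\begin{equation*}
\rho_{\mathcal{M}_{\tau}}=\int_{0}^{\tau}\frac{a(t)}{\sqrt{a(\tau)^{2}-a(t)^{2}}}\,dt
=\int_{a_{\text{inf}}/a_{0}}^{1}\frac{s\,a_{0}\dot b(a_{0}s)}{\sqrt{1-s^{2}}}\,ds,\qquad a_{0}=a(\tau),
\end{equation*}
and then argue that the integrand increases pointwise in $a_{0}$ (derivative $\dot b(a_{0}s)+a_{0}s\,\ddot b(a_{0}s)>0$) while the lower limit shrinks; I checked both substitutions and they are correct in both cases of Eq.\eqref{range}, so the argument is sound, and the first formula --- the proper radius written purely in synchronous coordinates --- is a nice by-product that does not appear in the paper. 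The paper, by contrast, gets monotonicity essentially for free from the original representation in Definition \ref{radiusMtau}: as $\tau$ grows, the prefactor $a(\tau)/2$ strictly increases, the integrand $\dot b\left(a(\tau)/\sqrt{\sigma}\right)$ is pointwise nondecreasing (since $\ddot b\ge 0$ makes $\dot b$ nondecreasing), and the upper limit $\sigma_{\infty}(\tau)$ is nondecreasing by Eq.\eqref{range}; positivity of the integrand then yields the claim in one line. So your route costs two changes of variable that the paper avoids, but it buys an explicit closed form for $\rho_{\mathcal{M}_{\tau}}$ (the one specialized in Corollary \ref{geometry2}) and isolates all $\tau$-dependence in the single parameter $a_{0}$, with no need to track three simultaneously varying ingredients.
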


\begin{proof}
By hypothesis, $\dot{b}$ is an increasing function, so from Eq.\eqref{thm3}, the proper distance from the observer to a spacetime point on $\mathcal{M}_{\tau}$ corresponding to any parameter value $\sigma$ satisfies,

\begin{equation}\label{thm3'}
\begin{split}
\rho&=\frac{a(\tau)}{2}\int_{1}^{\sigma}\dot{b}\left(\frac{a(\tau)}{\sqrt{\tilde{\sigma}}}\right)\frac{1}{\tilde{\sigma}^{3/2}\sqrt{\tilde{\sigma}-1}}d\tilde{\sigma}\\
&<\frac{a(\tau)}{2}\int_{1}^{\sigma_{\infty}(\tau)}\dot{b}\left(\frac{a(\tau)}{\sqrt{\tilde{\sigma}}}\right)\frac{1}{\tilde{\sigma}^{3/2}\sqrt{\tilde{\sigma}-1}}d\tilde{\sigma}\equiv \rho_{\mathcal{M}_{\tau}}\\
&\leq\frac{a(\tau)}{2} \,\dot{b}(a(\tau)) \int_{1}^{\infty}\frac{1}{\tilde{\sigma}^{3/2}\sqrt{\tilde{\sigma}-1}}d\tilde{\sigma}\\
&= \frac{a(\tau)}{2}\, \frac{1}{\dot{a}(\tau)}\,2 = \frac{1}{H},
\end{split}
\end{equation}
It follows immediately from Eqs.\eqref{range} and \eqref{thm3'} that $ \rho_{\mathcal{M}_{\tau}}$ increases with $\tau$. This proves part (a).  To prove (b), observe that since $a:(0,\infty)\rightarrow(a_{\text {inf}}, \infty)$, then $b:(a_{\text {inf}}, \infty)\rightarrow(0,\infty)$ and by hypothesis $b$ is an increasing function. Thus, $\lim_{a\rightarrow a_{\text {inf}}} b(a)\equiv b(a_{\text {inf}}^{+})=0$.  Using the notation of Theorem \ref{general}, Eq.\eqref{inverse}, and Corollary \ref{affinegeodesic}, we have,

\begin{equation}\label{tobigbang}
\lim_{\quad\rho\rightarrow\rho_{\mathcal{M}_{\tau}}}t(\tau,\sigma(\rho))=\lim_{\,\sigma\rightarrow\sigma_{\infty}(\tau)}t(\tau,\sigma)=\lim_{\,\sigma\rightarrow\sigma_{\infty}(\tau)}b\left(\frac{a(\tau)}{\sqrt{\sigma}}\right)=b(a_{\text {inf}}^{+})=0.
\end{equation}
It follows from Eq.\eqref{tsigma} that $dt/d\rho<0$ except at $\rho=0$, so $t$ is strictly decreasing as a function of $\rho$.

\end{proof}

\noindent The following corollary is an immediate consequence of Theorem \ref{radius}b.

\begin{corollary}\label{disjoint}
Under the hypotheses of Theorem \ref{radius}, no two distinct space-time points are simultaneous with respect to both synchronous time $t$ and Fermi time $\tau$ when $t=\tau$. 
\end{corollary}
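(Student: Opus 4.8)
The plan is to reduce the statement to the single observation that, on a fixed Fermi slice $\mathcal{M}_{\tau}$, the synchronous time coordinate $t$ attains the value $\tau$ at exactly one point, namely the Fermi observer $\gamma(\tau)$. First I would pin down the meaning of the hypothesis: two distinct events $p\neq q$ that are simultaneous with respect to both foliations with common time label $\tau$ are precisely two distinct points lying in the intersection $\mathcal{M}_{\tau}\cap\Sigma_{\tau}$, i.e.\ points satisfying $\tau(p)=\tau(q)=\tau$ together with $t(p)=t(q)=\tau$. It therefore suffices to prove that this intersection is the singleton $\{\gamma(\tau)\}$.

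Next I would restrict attention to the slice $\mathcal{M}_{\tau}$ and record that the synchronous time there depends on the radial parameter alone. By Eq.\eqref{thm1}, any point of $\mathcal{M}_{\tau}$ has synchronous time $t(\tau,\sigma)=b\!\left(a(\tau)/\sqrt{\sigma}\right)$, which is manifestly independent of the angular coordinates $\theta,\varphi$; combined with the strictly monotone relation $\rho=\rho_{\tau}(\sigma)$ of Eq.\eqref{thm3} together with Corollary \ref{affinegeodesic}, this shows that $t$ restricted to $\mathcal{M}_{\tau}$ is a function of $\rho$ alone.

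Then I would invoke Theorem \ref{radius}(b), which asserts that $t$ is strictly decreasing in $\rho$ along every spacelike geodesic $Y_{\tau}(\rho)$ emanating from $\gamma(\tau)$, with $t=\tau$ at $\rho=0$ and $t\to 0$ as $\rho\to\rho_{\mathcal{M}_{\tau}}$. Strict monotonicity forces the equation $t(\tau,\sigma(\rho))=\tau$ to hold only at $\rho=0$; since $\rho=0$ corresponds to the single point $\gamma(\tau)=(\tau,0,0,0)$ irrespective of the direction $(\theta,\varphi)$, the locus $\{p\in\mathcal{M}_{\tau}: t(p)=\tau\}$ equals exactly $\{\gamma(\tau)\}$. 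Consequently two distinct points cannot both lie in $\mathcal{M}_{\tau}\cap\Sigma_{\tau}$, which is the desired conclusion.

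The argument is short, and the only point that genuinely needs care is the reduction itself: one must confirm that the angular coordinates play no role, so that the single radial statement of Theorem \ref{radius}(b) upgrades to a statement about the full three-dimensional slice $\mathcal{M}_{\tau}$ rather than about one geodesic only. Beyond making the hypothesis precise, I expect no real obstacle.
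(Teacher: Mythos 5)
Your proposal is correct and follows exactly the paper's route: the paper derives this corollary as an immediate consequence of Theorem \ref{radius}(b), which is precisely the strict monotonicity of $t$ in $\rho$ that you invoke, so that on $\mathcal{M}_{\tau}$ the locus $\{t=\tau\}$ is the single point $\gamma(\tau)$. Your additional care about the angular coordinates (that $t$ on $\mathcal{M}_{\tau}$ depends on $\rho$ alone, via Eq.\eqref{thm1}) is a worthwhile explicit detail that the paper leaves implicit.
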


\begin{remark}\label{referee2} The Fermi hypersurfaces of $\tau$-simultaneous events, $\{\mathcal{M}_{\tau}\}$, are isotropic with respect to the Fermi observer, but they fail to be homogeneous, in contrast to the $t$-simultaneous hypersurfaces $\{\Sigma_{t}\}$ for synchronous time $t$ (c.f. \cite{ellis}).  This is explained by Corollary \ref{disjoint} and Theorem \ref{radius}b which shows that $t$ decreases with proper distance along each $\mathcal{M}_{\tau}$, so the intrinsic curvature is not constant on $\mathcal{M}_{\tau}$.  We discuss the significance of Theorem \ref{radius} in the concluding section.
\end{remark}

\noindent It is well-known that the motion of a comoving test particle follows Hubble's Law, 
\begin{equation}\label{hubble}
v_{H}(\chi_{0})=\dot{a}(\tau)\chi_{0}=Hd.
\end{equation}
Here the Hubble speed $v_{H}(\chi_{0})\equiv\dot{d}$, the Hubble parameter $H$ is given by Eq.\eqref{hubbleparameter}, and $d=a(\tau)\chi_{0}$ is the proper distance along the spacelike path
$Z_{\tau}(\chi)=(\tau, \chi, \theta_{0},\varphi_{0})$ as $\chi$ varies from $0$ to $\chi_{0}$.  Both $Z_{\tau}(\chi)$ and the path $Y_{\tau}(\rho)$ given by Eq.\eqref{YinFermi} --- described in different coodinate systems --- are orthogonal to the path of the Fermi observer at the spacetime point $\gamma(\tau)$, but $Y_{\tau}(\rho)$ is geodesic whereas $Z_{\tau}(\chi)$ is not.\\

\noindent From Eqs. \eqref{thm2} and \eqref{thm3}, it follows the the coordinate $\chi$ is a smooth, increasing function of $\rho$ along the geodesic $Y_{\tau}(\rho)$. We denote the inverse of that function (with fixed $\tau$) by $\rho(\tau,\chi)$.  The Fermi speed, $v_{F}(\chi_{0})$, of the radially receding, comoving test particle with world line $\gamma_{0}(\tau)=(\tau, \chi_{0}, \theta_{0},\varphi_{0})$, relative to the observer $\gamma(\tau)$, is given by,

\begin{equation}\label{vf}
v_{F}(\chi_{0})=\frac{d}{d\tau}\rho(\tau,\chi_{0})\equiv \dot{\rho}.
\end{equation}
Eq.\eqref{vf} follows from Prop. 3 in \cite{bolos} and is a special case of the more generally defined Fermi relative velocity for test particles and observers following arbitrary world lines. 
\begin{remark}\label{hubblefermi}
In analogy to a well-known expression for the Hubble speed of a test particle with peculiar velocity, the following identity holds for the Fermi relative speed of a comoving particle,

\begin{equation}\label{identity}
v_{F}(\chi_{0})= H(\tau)\rho + a(\tau)\frac{d}{d\tau}\left(\frac{\rho}{a(\tau)}\right),
\end{equation}
as may be verified by direct calculation.  The second term on the right side of Eq.\eqref{identity} is roughly analogous to the peculiar velocity in Robertson-Walker coordinates.
\end{remark}

\noindent The following Theorem provides a general expression for the Fermi speed $v_{F}(\chi_{0})$.

\begin{theorem}\label{thmfs}
Let $a(t)$ be a smooth, increasing function of $t$ with inverse function $b(t)$. The Fermi speed, $v_{F}(\chi_{0})$, of the comoving test particle with world line $\gamma_{0}(\tau)$, relative to the Fermi observer, is given by,

\begin{equation}
\begin{split}\label{fs}
v_{F}(\chi_{0})=&\frac{\dot{a}(\tau)}{2}\Bigg(\int_1^{\sigma_{0}}\frac{\dot{b}\left(\frac{a(\tau)}{\sqrt{\sigma}}\right)}{\sigma^{3/2}\sqrt{\sigma-1}}d\sigma\\
+&a(\tau)\int_1^{\sigma_{0}}\frac{\ddot{b}\left(\frac{a(\tau)}{\sqrt{\sigma}}\right)}{\sigma^2\sqrt{\sigma-1}}d\sigma-\frac{a(\tau)}{\sigma_{0}}\int_1^{\sigma_{0}}\frac{\ddot{b}\left(\frac{a(\tau)}{\sqrt{\sigma}}\right)}{\sigma\sqrt{\sigma-1}}d\sigma\Bigg),
\end{split}
\end{equation}
where $\sigma_{0}$ is the unique solution to $\chi(\tau,\sigma_{0})=\chi_{0}$ in Eq.\eqref{thm2}. 
\end{theorem}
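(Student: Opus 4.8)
The plan is to realize the Fermi speed as a total $\tau$-derivative of a composite function and to evaluate it by the chain rule together with implicit differentiation of the defining relation $\chi(\tau,\sigma_0)=\chi_0$. Since the test particle is comoving, its Robertson-Walker coordinate $\chi_0$ is constant in $\tau$; at each Fermi time $\tau$ its intersection with $\mathcal{M}_\tau$ corresponds, in the non-affine parameter of Theorem \ref{general}, to the value $\sigma_0=\sigma_0(\tau)$ uniquely determined by $\chi(\tau,\sigma_0)=\chi_0$ (unique because the integrand in Eq.\eqref{thm2} is positive, so $\chi(\tau,\cdot)$ is strictly increasing). The proper distance to the particle is then $\rho(\tau,\chi_0)=\rho_\tau(\sigma_0(\tau))$, with $\rho_\tau(\sigma)$ given by Eq.\eqref{thm3}, and hence
\[
v_F(\chi_0)=\frac{d}{d\tau}\rho_\tau(\sigma_0(\tau))=\frac{\partial \rho_\tau}{\partial \tau}+\frac{\partial \rho_\tau}{\partial \sigma}\,\frac{d\sigma_0}{d\tau},
\]
all quantities on the right being evaluated at $(\tau,\sigma_0)$.

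Next I would compute $d\sigma_0/d\tau$ by differentiating the constraint $\chi(\tau,\sigma_0(\tau))=\chi_0$ implicitly, which gives $d\sigma_0/d\tau=-(\partial\chi/\partial\tau)/(\partial\chi/\partial\sigma)$ (the denominator is nonzero by the strict monotonicity just noted, so $\sigma_0$ is a smooth function of $\tau$ by the implicit function theorem). Substituting yields
\[
v_F(\chi_0)=\frac{\partial \rho_\tau}{\partial \tau}-\frac{\partial \rho_\tau}{\partial \sigma}\,\frac{\partial\chi/\partial\tau}{\partial\chi/\partial\sigma}\Bigg|_{(\tau,\sigma_0)}.
\]
The four partial derivatives are then obtained from Eqs.\eqref{thm2} and \eqref{thm3}: the $\sigma$-derivatives $\partial_\sigma\rho_\tau$ and $\partial_\sigma\chi$ come straight from the fundamental theorem of calculus, while the $\tau$-derivatives require differentiating under the integral sign, using $\partial_\tau\dot b(a(\tau)/\sqrt{\tilde\sigma})=\ddot b(a(\tau)/\sqrt{\tilde\sigma})\,\dot a(\tau)/\sqrt{\tilde\sigma}$; the prefactor $a(\tau)$ in Eq.\eqref{thm3} also contributes a term upon computing $\partial_\tau\rho_\tau$.

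The remaining work is algebraic assembly. Differentiating the prefactor in $\rho_\tau$ reproduces the first ($\dot b$) integral in Eq.\eqref{fs}, and differentiating $\dot b$ inside the integrand of $\rho_\tau$ reproduces the second ($\ddot b/\sigma^2$) integral. The decisive cancellation occurs in the product $\partial_\sigma\rho_\tau\cdot(\partial_\tau\chi/\partial_\sigma\chi)$: the common factor $\dot b(a(\tau)/\sqrt{\sigma_0})$ cancels and the powers of $\sigma_0$ and of $\sqrt{\sigma_0-1}$ collapse to the clean coefficient $a(\tau)\dot a(\tau)/(2\sigma_0)$, producing exactly the third ($\ddot b/\sigma$) integral in Eq.\eqref{fs} with the stated sign. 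Collecting the three terms and factoring out $\dot a(\tau)/2$ gives Eq.\eqref{fs}.

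I expect the only genuine technical point — as opposed to bookkeeping — to be the justification of differentiation under the integral sign, since the integrands in Eqs.\eqref{thm2} and \eqref{thm3} have a singularity at the lower limit $\tilde\sigma=1$. This singularity is of order $(\tilde\sigma-1)^{-1/2}$ and hence integrable, and it persists unchanged after $\tau$-differentiation (which only replaces $\dot b$ by a bounded multiple of $\ddot b$); since the bound is locally uniform in $\tau$ and $\sigma_0(\tau)$ stays finite for fixed finite $\chi_0$, dominated convergence licenses the interchange on $[1,\sigma_0]$. Note that this argument, and the whole proof, uses only that $a$ is smooth and increasing with smooth inverse $b$, consistent with the weaker hypotheses of the theorem (neither $\ddot b\ge 0$ nor unboundedness of $a$ is needed here).
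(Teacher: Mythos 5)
Your proposal is correct and takes essentially the same route as the paper's proof: the paper likewise differentiates Eq.\eqref{thm3} in $\tau$ with the moving endpoint $\sigma_{0}(\tau)$, then differentiates the constraint $\chi(\tau,\sigma_{0}(\tau))=\chi_{0}$ from Eq.\eqref{thm2} to eliminate the $d\sigma_{0}/d\tau$ term (multiplying by $a(\tau)/\sigma_{0}$ and substituting, which is algebraically identical to your implicit-differentiation step), and your bookkeeping of the three resulting integrals matches Eq.\eqref{fs} exactly. Your added justification of differentiation under the integral sign near the integrable singularity at $\tilde{\sigma}=1$, and the observation that neither $\ddot{b}\geq 0$ nor unboundedness of $a$ is needed, are correct refinements that the paper leaves implicit.
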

\begin{proof}
With $\chi_{0}$ fixed, differentiating Eq.\eqref{thm3} with respect to $\tau$ yields,
\begin{multline}\label{fs1}
\frac{d}{d\tau}\rho(\tau,\chi_{0})=\frac{\dot{a}(\tau)}{2}\left[\int_1^{\sigma_{0}}\frac{\dot{b}\left(\frac{a(\tau)}{\sqrt{\sigma}}\right)}{\sigma^{3/2}\sqrt{\sigma-1}}d\sigma+a(\tau)\int_1^{\sigma_{0}}\frac{\ddot{b}\left(\frac{a(\tau)}{\sqrt{\sigma}}\right)}{\sigma^2\sqrt{\sigma-1}}d\sigma\right]\\
+\frac{a(\tau)}{2}\dot{b}\left(\frac{a(\tau)}{\sqrt{\sigma_{0}}}\right)\frac{1}{\sigma_{0}^{3/2}\sqrt{\sigma_{0}-1}}\frac{d\sigma_{0}}{d\tau}.
\end{multline}
Now differentiating Eq.\eqref{thm2} with respect to $\tau$ gives,
\begin{equation}\label{fs2}
\frac{d\chi_{0}}{d\tau}=0=\frac{\dot{a}(\tau)}{2}\int_1^{\sigma_{0}}\frac{\ddot{b}\left(\frac{a(\tau)}{\sqrt{\sigma}}\right)}{\sigma\sqrt{\sigma-1}}d\sigma+\frac{1}{2}\dot{b}\left(\frac{a(\tau)}{\sqrt{\sigma_{0}}}\right)\frac{1}{\sqrt{\sigma_{0}}\sqrt{\sigma_{0}-1}}\frac{d\sigma_{0}}{d\tau}.
\end{equation}
Multiplying Eq.\eqref{fs2} by $a(\tau)/\sigma_{0}$ and solving for the last term on the right hand side of Eq.\eqref{fs1} gives the desired result.
\end{proof}

\begin{corollary}\label{fsincreasing}
Let $a(t)$ be a smooth, increasing function of t with inverse $b(t)$ such that $\ddot{b}\geq0$, then $v_{F}(\chi_{0})$ is a monotone increasing function of $\chi_{0}$.
\end{corollary}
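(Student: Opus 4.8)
The plan is to hold $\tau$ fixed and to exploit the fact that, by Theorem \ref{thmfs}, $v_{F}(\chi_{0})$ depends on $\chi_{0}$ only through the parameter $\sigma_{0}$, the unique solution of $\chi(\tau,\sigma_{0})=\chi_{0}$ in Eq.\eqref{thm2}. Since the integrand in Eq.\eqref{thm2} is strictly positive, $\chi(\tau,\sigma_{0})$ is strictly increasing in $\sigma_{0}$, so $\sigma_{0}$ is in turn a strictly increasing function of $\chi_{0}$. By the chain rule it therefore suffices to show that $v_{F}$, regarded as a function of $\sigma_{0}$, satisfies $dv_{F}/d\sigma_{0}>0$.

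First I would abbreviate the bracketed factor in Eq.\eqref{fs} as $I_{1}(\sigma_{0})+a(\tau)I_{2}(\sigma_{0})-\tfrac{a(\tau)}{\sigma_{0}}I_{3}(\sigma_{0})$, where $I_{1},I_{2},I_{3}$ denote the three integrals from $1$ to $\sigma_{0}$ with integrands $\dot{b}(a(\tau)/\sqrt{\sigma})/(\sigma^{3/2}\sqrt{\sigma-1})$, $\ddot{b}(a(\tau)/\sqrt{\sigma})/(\sigma^{2}\sqrt{\sigma-1})$, and $\ddot{b}(a(\tau)/\sqrt{\sigma})/(\sigma\sqrt{\sigma-1})$ respectively. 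Differentiating with respect to $\sigma_{0}$ by the fundamental theorem of calculus, and applying the product rule to the last term, produces the upper-limit contributions $I_{1}'(\sigma_{0})$, $a(\tau)I_{2}'(\sigma_{0})$, and $-\tfrac{a(\tau)}{\sigma_{0}}I_{3}'(\sigma_{0})$, together with the extra term $+\tfrac{a(\tau)}{\sigma_{0}^{2}}I_{3}(\sigma_{0})$ coming from differentiating the factor $1/\sigma_{0}$.

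The crux of the argument, and the one step that is not purely mechanical, is the observation that the two $\ddot{b}$-derivative contributions cancel exactly: evaluating the integrands at $\sigma=\sigma_{0}$ gives $a(\tau)I_{2}'(\sigma_{0})=a(\tau)\ddot{b}(a(\tau)/\sqrt{\sigma_{0}})/(\sigma_{0}^{2}\sqrt{\sigma_{0}-1})$, and $\tfrac{a(\tau)}{\sigma_{0}}I_{3}'(\sigma_{0})$ equals the same quantity, so their difference vanishes. What survives is
\[
\frac{dv_{F}}{d\sigma_{0}}=\frac{\dot{a}(\tau)}{2}\left(\frac{\dot{b}\!\left(\frac{a(\tau)}{\sqrt{\sigma_{0}}}\right)}{\sigma_{0}^{3/2}\sqrt{\sigma_{0}-1}}+\frac{a(\tau)}{\sigma_{0}^{2}}\int_{1}^{\sigma_{0}}\frac{\ddot{b}\!\left(\frac{a(\tau)}{\sqrt{\sigma}}\right)}{\sigma\sqrt{\sigma-1}}\,d\sigma\right).
\]

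Finally I would invoke the hypotheses to conclude positivity: $\dot{a}(\tau)>0$ since $a$ is increasing, $\dot{b}>0$ since $b$ is increasing, and $\ddot{b}\geq0$ by assumption. Hence the first term is strictly positive for $\sigma_{0}>1$ and the integral is nonnegative, so $dv_{F}/d\sigma_{0}>0$; by the chain rule $dv_{F}/d\chi_{0}>0$, and $v_{F}(\chi_{0})$ is strictly increasing. I expect no genuine obstacle beyond spotting the cancellation, since once it is noticed the sign of every remaining term is immediate from the monotonicity and convexity hypotheses on $b$.
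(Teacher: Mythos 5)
Your proposal is correct and follows essentially the same route as the paper: differentiate Eq.\eqref{fs} with respect to $\sigma_{0}$, observe that the two $\ddot{b}$ boundary contributions cancel (your surviving expression is exactly the paper's, just with the factor $1/\sigma_{0}^{2}$ distributed), conclude positivity from $\dot{a},\dot{b}>0$ and $\ddot{b}\geq 0$, and finish with the chain rule via the monotonicity of $\sigma_{0}$ in $\chi_{0}$. The only difference is cosmetic: the paper states the differentiated expression without exhibiting the cancellation, which you spell out explicitly.
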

\begin{proof}
Differentiating Eq.\eqref{fs} with respect to $\sigma_{0}$ gives,
\begin{equation}
\frac{\dot{a}(\tau)}{2\sigma_{0}^2}\left(\frac{\sqrt{\sigma_{0}}}{\sqrt{\sigma_{0}-1}}\dot{b}\left(\frac{a(\tau)}{\sqrt{\sigma_{0}}}\right)+a(\tau)\int_1^{\sigma_{0}}\frac{\ddot{b}\left(\frac{a(\tau)}{\sqrt{\sigma}}\right)}{\sigma\sqrt{\sigma-1}}d\sigma\right).
\end{equation}
With the assumption that $\ddot{b}\geq0$, we see this expression is positive on its domain. Since $\sigma_{0}$ is an increasing function of $\chi_{0}$, the result follows from the chain rule.
\end{proof}

\noindent In the next corollary we consider the class of Robertson-Walker spacetimes for which the scale factor has the form,
\begin{equation}\label{aalpha}
\begin{array}{lr}
a(t)=t^{\alpha} & 0<\alpha\leq 1.
\end{array}
\end{equation}

\noindent It is easily checked that for these models, $\ddot{b}(t)\geq0$, and therefore by Theorem \ref{fermi}, Fermi coordinates are global. This class of spacetimes includes the Milne universe ($\alpha =1$), radiation-dominated universe ($\alpha =1/2$), and matter-dominated universe ($\alpha =2/3$) considered in the next section. The following corollary is a consequence of Theorem \ref{thmfs}.

\begin{corollary}\label{alphabound}
In Robertson-Walker spacetimes with $a(t)=t^{\alpha}$ for $0<\alpha\leq 1$, the Fermi speed of the comoving test particle with world line $\gamma_{0}(\tau)$, relative to the Fermi observer is time independent and is given by,
\begin{equation}\label{fsalpha}
v_{F}(\chi_{0})=\frac{1}{2\alpha}\bigg(\int_1^{\sigma_{0}}\frac{1}{\sigma^{\frac{1}{2\alpha}+1}\sqrt{\sigma-1}}d\sigma+\frac{\alpha-1}{\sigma_{0}}\int_1^{\sigma_{0}}\frac{1}{\sigma^{\frac{1}{2\alpha}}\sqrt{\sigma-1}}d\sigma\bigg).
\end{equation}
The least upper bound for $\{v_{F}(\chi_{0})\}$ is given by,
\begin{equation}\label{upperbound}
\lim_{\chi_{0} \to\infty}v_{F}(\chi_{0})=\frac{\sqrt{\pi}\,\,\Gamma(\frac{1}{2\alpha}+\frac{1}{2})}{2\alpha\,\,\Gamma(\frac{1}{2\alpha}+1)}.
\end{equation}
The right side of Eq. \eqref{upperbound} is bounded above by $1/\alpha$ with equality only for $\alpha=1$.
\end{corollary}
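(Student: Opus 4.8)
The plan is to specialize the general Fermi-speed formula of Theorem \ref{thmfs} to $a(t)=t^{\alpha}$ and then extract the supremum. Since $a(t)=t^{\alpha}$ has inverse $b(t)=t^{1/\alpha}$, I would first record $\dot a(\tau)=\alpha\tau^{\alpha-1}$, $\dot b(s)=\tfrac1\alpha s^{1/\alpha-1}$, and $\ddot b(s)=\tfrac{1-\alpha}{\alpha^2}s^{1/\alpha-2}$, and then evaluate $\dot b$ and $\ddot b$ at the argument $s=a(\tau)/\sqrt\sigma=\tau^{\alpha}\sigma^{-1/2}$. The key bookkeeping observation is that $\dot b(\tau^{\alpha}\sigma^{-1/2})=\tfrac1\alpha\tau^{1-\alpha}\sigma^{-\frac1{2\alpha}+\frac12}$ and $\ddot b(\tau^{\alpha}\sigma^{-1/2})=\tfrac{1-\alpha}{\alpha^2}\tau^{1-2\alpha}\sigma^{-\frac1{2\alpha}+1}$, so that each of the three integrands in Eq.\eqref{fs} carries an explicit factor $\tau^{1-\alpha}$ (after absorbing the extra $a(\tau)=\tau^{\alpha}$ multiplying the two $\ddot b$ integrals). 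This factor cancels against the prefactor $\dot a(\tau)=\alpha\tau^{\alpha-1}$, leaving an expression with no explicit $\tau$; this is precisely the asserted time independence.

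After this cancellation the powers of $\sigma$ in the $\dot b$ integral and in the $a(\tau)\!\int\!\ddot b/(\sigma^2\sqrt{\sigma-1})$ integral coincide, both reducing the integrand to $\sigma^{-(\frac1{2\alpha}+1)}(\sigma-1)^{-1/2}$, so the two may be merged; their coefficients combine as $\tfrac1\alpha+\tfrac{1-\alpha}{\alpha^2}=\tfrac1{\alpha^2}$, while the remaining integral keeps $\sigma^{-\frac1{2\alpha}}$ together with the factor $1/\sigma_0$. Collecting terms and factoring out the overall $\tfrac\alpha2\cdot\tfrac1{\alpha^2}=\tfrac1{2\alpha}$ reproduces Eq.\eqref{fsalpha}. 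This step is mechanical; the only care required is tracking the exponents $-\tfrac1{2\alpha}+\tfrac12-\tfrac32$ and $-\tfrac1{2\alpha}+1-2$ correctly so that the two integrands really do match.

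For the least upper bound I would invoke Corollary \ref{fsincreasing}: since $\ddot b\ge0$ here, $v_F(\chi_0)$ is increasing in $\chi_0$, equivalently in $\sigma_0$, so the least upper bound equals $\lim_{\sigma_0\to\infty}v_F$. In this limit the second term of Eq.\eqref{fsalpha} drops out: for $0<\alpha<1$ the integral $\int_1^{\sigma_0}\sigma^{-\frac1{2\alpha}}(\sigma-1)^{-1/2}\,d\sigma$ converges, its integrand decaying like $\sigma^{-\frac1{2\alpha}-\frac12}$ with $\tfrac1{2\alpha}+\tfrac12>1$, so division by $\sigma_0$ sends it to $0$; for $\alpha=1$ the coefficient $\alpha-1$ vanishes outright. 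Only the first integral survives, and the substitution $u=1/\sigma$ turns it into the Beta integral $\int_0^1 u^{\frac1{2\alpha}-\frac12}(1-u)^{-1/2}\,du=B\!\left(\tfrac1{2\alpha}+\tfrac12,\tfrac12\right)=\sqrt\pi\,\Gamma(\tfrac1{2\alpha}+\tfrac12)/\Gamma(\tfrac1{2\alpha}+1)$; multiplying by $\tfrac1{2\alpha}$ gives Eq.\eqref{upperbound}.

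Finally, for the bound by $1/\alpha$, rather than estimate the Gamma quotient I would keep the integral form $\lim_{\sigma_0\to\infty}v_F=\tfrac1{2\alpha}\int_1^\infty \sigma^{-(\frac1{2\alpha}+1)}(\sigma-1)^{-1/2}\,d\sigma$. Because $\tfrac1{2\alpha}+1\ge\tfrac32$ when $\alpha\le1$, the integrand is dominated pointwise on $(1,\infty)$ by $\sigma^{-3/2}(\sigma-1)^{-1/2}$, whose integral over $(1,\infty)$ equals $2$ --- the very evaluation already performed in the proof of Theorem \ref{radius}. Hence the limit is at most $\tfrac1{2\alpha}\cdot2=\tfrac1\alpha$, and the pointwise domination is an equality for all $\sigma$ only when the exponents match, i.e.\ $\tfrac1{2\alpha}+1=\tfrac32$, which forces $\alpha=1$; this yields strict inequality for $\alpha<1$. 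The main obstacle is not any single step but the disciplined exponent-tracking in the first two paragraphs, together with the minor subtlety that ``$\chi_0\to\infty$'' must be read through the monotone change of variables as $\sigma_0\to\infty$, and that the vanishing of the second term needs the cases $\alpha<1$ and $\alpha=1$ handled separately.
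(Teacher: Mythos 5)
Your proposal is correct and follows essentially the same route as the paper's proof: specialize Theorem \ref{thmfs} to $b(t)=t^{1/\alpha}$ to get Eq.\eqref{fsalpha}, invoke Corollary \ref{fsincreasing} so that the least upper bound is the $\sigma_{0}\to\infty$ limit, show the second term vanishes in that limit, and bound the surviving integral pointwise by the $\alpha=1$ integrand $\sigma^{-3/2}(\sigma-1)^{-1/2}$, whose integral is $2$, with equality only when the exponents match. The only differences are matters of detail in your favor: you evaluate the limiting integral explicitly as the Beta function $B\left(\tfrac{1}{2\alpha}+\tfrac12,\tfrac12\right)$, where the paper simply asserts the Gamma-function identity, and you replace the paper's L'H\^opital step by a clean convergence-plus-division argument split into the cases $\alpha<1$ and $\alpha=1$.
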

\begin{proof}
By Lemma \ref{fsincreasing}, $v_{F}(\chi_{0})$ is a strictly increasing function.  It follows from Lemma \ref{lem1} that the limit on the left hand side of Eq.\eqref{upperbound} may be found by taking the limit of Eq.\eqref{fsalpha} as $\sigma_{0}\rightarrow\sigma_{\infty}(\tau)\equiv\infty$.  An application of L'H\^opital's rule shows that the limit of the second term in Eq.\eqref{fsalpha} is zero. Consider the first integral in Eq. \eqref{fsalpha}. For $0< \alpha\leq 1$,
\begin{equation}
\frac{1}{\sigma^{\frac{1}{2\alpha}+1}\sqrt{\sigma-1}}\leq \frac{1}{\sigma^{\frac{3}{2}}\sqrt{\sigma-1}},
\end{equation}
with equality only for $\alpha=1$. Thus,
\begin{equation}
\int_1^{\sigma_{0}}\frac{1}{\sigma^{\frac{1}{2\alpha}+1}\sqrt{\sigma-1}}d\sigma\leq 2\sqrt{\frac{\sigma_{0}-1}{\sigma_{0}}}.
\end{equation}
Therefore,
\begin{equation}
\frac{\sqrt{\pi}\,\,\Gamma(\frac{1}{2\alpha}+\frac{1}{2})}{\Gamma(\frac{1}{2\alpha}+1)}=\int_1^{\infty}\frac{1}{\sigma^{\frac{1}{2\alpha}+1}\sqrt{\sigma-1}}d\sigma\leq 2,
\end{equation}
from which the limit and upper bound follow.
\end{proof}

\noindent The proof of the following corollary follows by direct calculation. 

\begin{corollary}\label{geometry}
In Robertson-Walker spacetimes with $a(t)=t^{\alpha}$ for $0<\alpha\leq 1$, the Fermi speed of the comoving test particle with world line $\gamma_{0}(\tau)$ satisfies the following relationship,

\begin{equation}\label{distanceovertime}
v_{F}(\chi_{0})=\frac{\rho}{\tau}+\frac{\alpha-1}{2\alpha\sigma_{0}}\int_1^{\sigma_{0}}\frac{1}{\sigma^{\frac{1}{2\alpha}}\sqrt{\sigma-1}}d\sigma,
\end{equation}
where the proper distance $\rho$ of the test particle from the Fermi observer is given by Eq.\eqref{thm3}.
\end{corollary}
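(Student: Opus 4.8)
The plan is to obtain Eq.\eqref{distanceovertime} directly from the exact Fermi-speed formula Eq.\eqref{fsalpha} of Corollary \ref{alphabound}, rather than recomputing $v_{F}$ from scratch. Comparing the two displays, the second term of Eq.\eqref{distanceovertime}, namely $\frac{\alpha-1}{2\alpha\sigma_{0}}\int_1^{\sigma_{0}}\sigma^{-1/2\alpha}(\sigma-1)^{-1/2}\,d\sigma$, already coincides verbatim with the second term of Eq.\eqref{fsalpha}. Hence the whole corollary reduces to the single identity
\begin{equation}
\frac{\rho}{\tau}=\frac{1}{2\alpha}\int_1^{\sigma_{0}}\frac{1}{\sigma^{\frac{1}{2\alpha}+1}\sqrt{\sigma-1}}\,d\sigma,
\end{equation}
where $\rho=\rho_{\tau}(\sigma_{0})$ is the proper distance given by Eq.\eqref{thm3} evaluated at the parameter value $\sigma_{0}$ that the test particle's coordinate $\chi_{0}$ determines.

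First I would specialize Eq.\eqref{thm3} to the power law $a(t)=t^{\alpha}$, whose inverse is $b(t)=t^{1/\alpha}$ with $\dot{b}(t)=\frac{1}{\alpha}t^{\frac{1}{\alpha}-1}$. Substituting $a(\tau)=\tau^{\alpha}$ gives $\dot{b}\!\left(a(\tau)/\sqrt{\tilde{\sigma}}\right)=\frac{1}{\alpha}\,\tau^{1-\alpha}\,\tilde{\sigma}^{\frac{1}{2}-\frac{1}{2\alpha}}$, so that the prefactor in the integrand of Eq.\eqref{thm3} collapses to
\begin{equation}
\frac{a(\tau)}{2}\,\dot{b}\!\left(\frac{a(\tau)}{\sqrt{\tilde{\sigma}}}\right)\frac{1}{\tilde{\sigma}^{3/2}\sqrt{\tilde{\sigma}-1}}=\frac{\tau}{2\alpha}\,\frac{1}{\tilde{\sigma}^{\frac{1}{2\alpha}+1}\sqrt{\tilde{\sigma}-1}}.
\end{equation}
The powers of $\tau$ combine as $\tau^{\alpha}\cdot\tau^{1-\alpha}=\tau$ independently of $\alpha$, which is precisely the mechanism behind the time-independence recorded in Corollary \ref{alphabound}. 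Integrating from $1$ to $\sigma_{0}$ and dividing by $\tau$ yields the displayed identity, and inserting it into Eq.\eqref{fsalpha} produces Eq.\eqref{distanceovertime}.

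I do not expect any genuine obstacle here: the argument is the ``direct calculation'' advertised in the statement, and it invokes only Theorem \ref{general} (for the form of $\rho$) and the closed form Eq.\eqref{fsalpha} already established in Corollary \ref{alphabound}. The one point demanding care is the bookkeeping of the exponents of $\tilde{\sigma}$ and $\tau$ when simplifying $\dot{b}(a(\tau)/\sqrt{\tilde{\sigma}})$, together with the correct reading of $\rho$ in the statement as $\rho_{\tau}(\sigma_{0})$; once the integrand of Eq.\eqref{thm3} is recognized as $\tau/(2\alpha)$ times the integrand of the first term of Eq.\eqref{fsalpha}, the result is immediate.
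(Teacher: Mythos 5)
Your proposal is correct and matches the paper's intent: the paper dismisses this corollary with ``follows by direct calculation,'' and that calculation is exactly what you carried out, namely specializing Eq.\eqref{thm3} to $b(t)=t^{1/\alpha}$ so that $\rho=\frac{\tau}{2\alpha}\int_1^{\sigma_0}\sigma^{-\frac{1}{2\alpha}-1}(\sigma-1)^{-1/2}\,d\sigma$, and then recognizing this as $\tau$ times the first term of Eq.\eqref{fsalpha}. Your exponent bookkeeping ($\tau^{\alpha}\cdot\tau^{1-\alpha}=\tau$ and $\tilde{\sigma}^{\frac{1}{2}-\frac{1}{2\alpha}-\frac{3}{2}}=\tilde{\sigma}^{-\frac{1}{2\alpha}-1}$) is accurate, so there is no gap.
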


\begin{corollary}\label{geometry2}
In Robertson-Walker spacetimes with $a(t)=t^{\alpha}$ for $0<\alpha\leq 1$, the proper radius $\rho_{\mathcal{M}_{\tau}}$ of the Fermi space slice of $\tau$-simultaneous events, $\mathcal{M}_{\tau}$, is a linear function of $\tau$ and is given by,

\begin{equation}\label{Mradius}
\rho_{\mathcal{M}_{\tau}}=\tau \frac{\sqrt{\pi}\,\,\Gamma(\frac{1}{2\alpha}+\frac{1}{2})}{2\alpha\,\,\Gamma(\frac{1}{2\alpha}+1)}.
\end{equation}
\end{corollary}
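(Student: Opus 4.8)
The plan is to compute $\rho_{\mathcal{M}_{\tau}}$ directly from its definition in Definition \ref{radiusMtau}, specialize the integrand to the power-law scale factor $a(t)=t^{\alpha}$, and then recognize the resulting integral as one already evaluated in the proof of Corollary \ref{alphabound}.

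First I would record the elementary facts about $a(t)=t^{\alpha}$. Its inverse is $b(t)=t^{1/\alpha}$, so $\dot{b}(t)=\tfrac{1}{\alpha}t^{\frac{1}{\alpha}-1}$, and since $\lim_{t\to 0^{+}}a(t)=0$, Eq.\eqref{range} gives $\sigma_{\infty}(\tau)=\infty$. Evaluating the derivative factor at the argument $a(\tau)/\sqrt{\sigma}=\tau^{\alpha}\sigma^{-1/2}$ yields
\[
\dot{b}\!\left(\frac{a(\tau)}{\sqrt{\sigma}}\right)=\frac{1}{\alpha}\,\tau^{1-\alpha}\,\sigma^{\frac{1}{2}-\frac{1}{2\alpha}}.
\]

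Next I would substitute this, together with $a(\tau)=\tau^{\alpha}$ and $\sigma_{\infty}(\tau)=\infty$, into the defining formula
\[
\rho_{\mathcal{M}_{\tau}}=\frac{a(\tau)}{2}\int_{1}^{\sigma_{\infty}(\tau)}\dot{b}\!\left(\frac{a(\tau)}{\sqrt{\sigma}}\right)\frac{1}{\sigma^{3/2}\sqrt{\sigma-1}}\,d\sigma.
\]
The key bookkeeping step is the exponent arithmetic: the prefactor $a(\tau)=\tau^{\alpha}$ and the $\tau$-power inside $\dot b$ combine as $\tau^{\alpha}\cdot\tau^{1-\alpha}=\tau$, so all the time dependence factors out of the integral to the first power, while the $\sigma$-powers collect, via $\tfrac12-\tfrac{1}{2\alpha}-\tfrac32=-1-\tfrac{1}{2\alpha}$, to $\sigma^{-\frac{1}{2\alpha}-1}$. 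This gives
\[
\rho_{\mathcal{M}_{\tau}}=\frac{\tau}{2\alpha}\int_{1}^{\infty}\frac{1}{\sigma^{\frac{1}{2\alpha}+1}\sqrt{\sigma-1}}\,d\sigma,
\]
which already exhibits the claimed linearity in $\tau$.

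Finally, the remaining integral is precisely the one shown in the proof of Corollary \ref{alphabound} to equal $\sqrt{\pi}\,\Gamma(\tfrac{1}{2\alpha}+\tfrac12)/\Gamma(\tfrac{1}{2\alpha}+1)$, obtained by the substitution $\sigma=1/u$, which turns it into the Beta integral $B(\tfrac{1}{2\alpha}+\tfrac12,\tfrac12)$. Inserting this value produces Eq.\eqref{Mradius}. There is no genuine obstacle here: the only care needed is the exponent arithmetic confirming that $\tau$ factors out linearly, and citing the already-established integral identity rather than re-deriving it. I would also note in passing that the resulting coefficient is exactly the least upper bound for the Fermi speeds in Eq.\eqref{upperbound}, so the result can be phrased compactly as $\rho_{\mathcal{M}_{\tau}}=\tau\,\lim_{\chi_{0}\to\infty}v_{F}(\chi_{0})$.
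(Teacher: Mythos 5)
Your proposal is correct, and the computations check out: with $b(t)=t^{1/\alpha}$ one indeed gets $\dot{b}(a(\tau)/\sqrt{\sigma})=\tfrac{1}{\alpha}\tau^{1-\alpha}\sigma^{\frac12-\frac{1}{2\alpha}}$, the $\tau$-powers combine to $\tau^{1}$, the $\sigma$-powers to $\sigma^{-\frac{1}{2\alpha}-1}$, and the substitution $\sigma=1/u$ turns the integral into $B(\tfrac{1}{2\alpha}+\tfrac12,\tfrac12)=\sqrt{\pi}\,\Gamma(\tfrac{1}{2\alpha}+\tfrac12)/\Gamma(\tfrac{1}{2\alpha}+1)$. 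However, your route is genuinely different from the paper's. You evaluate the defining integral of Definition \ref{radiusMtau} head-on; the paper instead takes the limit $\sigma_{0}\to\infty$ of both sides of the kinematic identity \eqref{distanceovertime} from Corollary \ref{geometry}, namely $v_{F}(\chi_{0})=\rho/\tau+\tfrac{\alpha-1}{2\alpha\sigma_{0}}\int_1^{\sigma_{0}}\sigma^{-\frac{1}{2\alpha}}(\sigma-1)^{-1/2}d\sigma$, applies the least upper bound \eqref{upperbound} to the left side, and notes (as in Corollary \ref{alphabound}, via L'H\^opital) that the second term on the right vanishes, so that $\rho_{\mathcal{M}_{\tau}}/\tau$ equals the supremum of the Fermi speeds. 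What your direct computation buys is self-containedness: it needs only Definition \ref{radiusMtau} and the Beta-integral identity, not Corollaries \ref{geometry} and \ref{alphabound}, and it makes the linearity in $\tau$ visible as pure exponent bookkeeping. What the paper's route buys is economy and geometric content: it reuses results already established and exhibits the identity $\rho_{\mathcal{M}_{\tau}}=\tau\,\lim_{\chi_{0}\to\infty}v_{F}(\chi_{0})$ as the mechanism behind the formula --- precisely the observation you relegate to a closing remark, and which the paper exploits in Section 5 to tie superluminal Fermi speeds to the size of the slices $\mathcal{M}_{\tau}$.
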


\begin{proof}
The result follows by taking the limit as $\sigma_{0}\rightarrow\infty$ of of both sides of Eq.\eqref{distanceovertime}, applying Eq.\eqref{upperbound}, and observing as in the proof of Corollary \ref{alphabound}, that the second term on the right side of Eq.\eqref{distanceovertime} converges to zero.
\end{proof}

\begin{remark}\label{increasinggamma}
Since the coefficient of $\tau$ in Eq.\eqref{Mradius} is a decreasing function of $\alpha$, the proper radii, at fixed $\tau$, of the $\tau$-simultaneous events, $\mathcal{M}_{\tau}$, decrease as functions of $\alpha$ in Robertson-Walker spacetimes with $a(t)=t^{\alpha}$ for $0<\alpha\leq 1$. Conversely, $\rho_{\mathcal{M}_{\tau}}\rightarrow \infty$ for any fixed $\tau>0$ as $\alpha\rightarrow 0^{+}$.
\end{remark}

\noindent \textbf{{\normalsize 5.  Particular Cosmologies}}\\

\noindent In this section we apply results of the previous sections to particular cosmologies: the Milne universe, de Sitter universe, radiation-dominated universe, and matter-dominated universe. We include the first of these for purposes of illustration only, as the results are well known (see e.g., \cite{cook}).   In the inflationary de Sitter universe, the expressions for the metric components in Fermi coordinates for a timelike geodesic observer are also known \cite{CM} and \cite{KC3}, but we show here that co-moving particles necessarily recede from the observer only with Fermi velocities less than the speed of light, in contrast to their Hubble velocities (see Eq.\eqref{hubble}). To our knowledge, the results below for the radiation-dominated and matter-dominated universes are new.\\

\noindent \emph{The Milne Universe}\\

\noindent The Milne Universe is a special case of a Robertson-Walker spacetime and a useful prototype cosmology.  It is a solution to the field equations with no matter, radiation, or vaccum energy. For this space-time, $k=-1$ and
\begin{equation}\label{milnea}
a(t)=t,
\end{equation}
and we have for the inverse function $b$,
\begin{equation}\label{milneb}
\begin{array}{lcr}
b(t)=t & \dot{b}=1 &\ddot{b}=0\geq0.\\
\end{array}
\end{equation}
To find Fermi coordinates for a comoving observer, we first integrate Eq.\eqref{thm3} with the result,

\begin{equation}\label{milne}
\rho=\tau\sqrt{\frac{\sigma-1}{\sigma}},
\end{equation}
and thus,

\begin{equation}\label{sigmarho}
\sigma = \frac{1}{1-\left(\frac{\rho}{\tau}\right)^{2}}.
\end{equation}
It now follows from Eqs.\eqref{thm1} and \eqref{thm2} that,

\begin{equation}\label{milne1}
t=\frac{\tau}{\sqrt{\sigma}}= \sqrt{\tau^2-\rho^2},
\end{equation}
and
\begin{equation}\label{milne2}
\chi=\ln(\sqrt{\sigma}+\sqrt{\sigma-1})=\tanh^{-1}\sqrt{\frac{\sigma-1}{\sigma}}=\tanh^{-1}\left(\frac{\rho}{\tau}\right).
\end{equation}
Eqs.\eqref{milne1} and \eqref{milne2} are easily inverted to give,

\begin{equation}\label{milne3}
\begin{split}
\tau=&t\cosh\chi\\
\rho=&t\sinh\chi.
\end{split}
\end{equation}

\noindent The proper radius, $\rho_{\mathcal{M}_{\tau}}$, of the Fermi space slice of $\tau$-simultaneous events, ${\mathcal{M}_{\tau}}$ given by Eq.\eqref{Mradius} for this example is,

\begin{equation}\label{lightcone}
\rho_{\mathcal{M}_{\tau}} =\tau \frac{\sqrt{\pi}\,\,\Gamma(1)}{2\,\,\Gamma(\frac{3}{2})}=\tau=\frac{1}{H},
\end{equation}
which shows that the upper bound given by Theorem \ref{radius} is sharp. Theorem \ref{fermi} guarantees that the Fermi coordinates defined by the coordinate transformation determined by Eqs.\eqref{milne1} and \eqref{milne2} are global, and Eq.\eqref{fermipolar} for this case reduces to the polar form of the Minkowski line element,
\begin{equation}\label{minkowskimilne}
ds^2=-d\tau^2+d\rho^2+\rho^2d\Omega^2.
\end{equation}
We can thus recover from Eqs.\eqref{lightcone} and \eqref{minkowskimilne} the well known result that the Milne universe may be identified as the forward light cone in Minkowski space-time, foliated by negatively curved hyperboloids orthogonal to the time axis.   The orthogonal spacelike geodesic $Y_{\tau}(\rho)$ expressed in Fermi (i.e., Minkowski) coordinates has the form given by Eq.\eqref{YinFermi}:

\begin{equation}\label{YinFermi2}
Y_{\tau}(\rho)=(\tau,a^{1}\rho,a^{2}\rho,a^{3}\rho),
\end{equation}
where $(a^{1},a^{2},a^{3})$ is any unit vector in $\mathbb{R}^{3}$, i.e., $Y_{\tau}(\rho)$ is a horizontal line segment orthogonal to the vertical time axis in Minkowski space. Now from Eqs.\eqref{fsalpha} and \eqref{milne2}, the Fermi speed of a comoving test particle, with fixed spatial coordinate $\chi$ at proper time $\tau$, corresponding to parameter $\sigma$, is given by,
\begin{equation}\label{milne2'}
v_F=\sqrt{\frac{\sigma-1}{\sigma}}=\tanh\chi=\frac{\rho}{\tau},
\end{equation}
which by Eq.\eqref{lightcone} cannot reach or exceed the speed of light.  Comoving test particles have constant Fermi speeds proportional to their distances from the observer.\\

\noindent Although not new for this example, Fermi coordinates for a Fermi observer in the Milne universe lead to an interpretation of the Milne universe that is not immediately available via the original Robertson-Walker coordinates.  The ``big bang'' may be identified as the origin of Minkowski coordinates, and space-time itself may be defined as the set of all possible space-time points in Minkowski space that can be occupied by a test particle whose world line includes the origin of coordinates, i,e., the big bang.  Space does not expand, rather, idealized test particles from an initial ``explosion'' merely fly apart from the Fermi observer in all directions with sub light Fermi velocities. Similar interpretations were given in \cite{rindler, page, ellis}.\\

\noindent \emph{The de Sitter Universe}\\

\noindent The line element for the de Sitter Universe with Hubble's constant $H_0>0$ is given by Eq.\eqref{frwmetric} with $k=0$ and, 
\begin{equation}
\begin{array}{lcr}
a(t)=e^{H_0 t} & \dot{a}(t)=H_0 e^{H_0 t},\\
\end{array}
\end{equation}
and thus,
\begin{equation}\label{dSb}
\begin{array}{lcr}
b(t)=\frac{1}{H_0}\ln(t) & \dot{b}(t)=\frac{1}{H_0 t} & \ddot{b}(t)=-\frac{1}{H_0 t^2}<0.\\
\end{array}
\end{equation} 
The de Sitter universe is a vacuum solution to the field equations with cosmological constant $\Lambda = 3H_{0}^{2}$. Eqs.\eqref{thm1},\eqref{thm2} and \eqref{thm3} yield,

\begin{equation}\label{dS3}
\rho=\frac{1}{H_0}\sec^{-1}(\sqrt{\sigma}),
\end{equation}
and,

\begin{equation}\label{dS1}
t=\tau-\frac{1}{H_0}\ln{\sqrt{\sigma}}=\tau+\frac{\ln(\cos(H_0\rho))}{H_0},
\end{equation}
\begin{equation}\label{dS2}
\chi=\frac{\sqrt{\sigma-1}}{H_0 e^{H_0\tau}}=\frac{e^{-H_0\tau}\tan(H_0\rho)}{H_0}.
\end{equation}
It follows immediately from Eq.\eqref{dS3} that $H_0\rho <\pi/2$ (note that the hypothesis to Theorem \ref{radius} is violated here). If synchronous time $t$ is required to be positive, then it follows from Eq.\eqref{dS3} together with Eqs.\eqref{parameter} and \eqref{range} that $\sqrt{\sigma}<\exp(H_{0}\tau)$, and therefore,

\begin{equation}
e^{-H_0\tau}<\cos(H_0 \rho).
\end{equation} 
Thus, along the spacelike geodesics,

\begin{equation}
\chi<\frac{\sin(H_0 \rho)}{H_0}<\frac{1}{H_0}.
\end{equation}
Although Fermi coordinates are not global for this example, in light of Remark \ref{nonglobal} we may calculate the metric coefficients in $\{\tau,\rho,\theta,\varphi\}$ coordinates by inserting Eqs.\eqref{dSb}, \eqref{dS3}, and \eqref{dS2} into Eqs.\eqref{fermipolar} and \eqref{gtautau}.  The result is,
\begin{equation}
ds^2=-\cos^2(H_0\rho)d\tau^2+d\rho^2+\frac{\sin^2(H_{0}\rho)}{H_{0}^2}d\Omega^{2},
\end{equation}
which is the same expression obtained in \cite{CM}. A description of the way in which Fermi coordinates break down at the boundary of the Fermi chart in de Sitter space was included in \cite{KC3}.  It is intriguing to observe that maximal Fermi charts in both the Milne and de Sitter universes each occupy a single ``quadrant'' of larger embedding space-times, \cite{hawking}. \\

\noindent The Fermi relative speed of a comoving test particle at time $\tau$ with fixed space coordinate $\chi$, corresponding to the parameter $\sigma$, is given by Theorem \ref{thmfs} and reduces to,
\begin{equation}
v_F=\frac{\sqrt{\sigma-1}}{\sigma}.
\end{equation}
Combining this with Eq.\eqref{dS2} gives,
\begin{equation}
v_F(\chi)=\frac{H_{0}e^{H_{0}\tau}\chi}{1+(H_{0}e^{H_{0}\tau}\chi)^2}.
\end{equation}
Thus, the Fermi relative speed of a comoving test particle is bounded by one-half the speed of light for all values
of $H_0$, $\chi$ and $\tau$. Although Hubble and Fermi speeds are not directly comparable, as we discuss in the concluding section, the above expression is strikingly different from the standard formula for the Hubble speed of a comoving test particle,
\begin{equation}\label{hubspeeddS}
v_{H}(\chi)=H_{0}e^{H_{0}\tau}\chi,
\end{equation} 
which is unbounded.\\

\noindent \emph{Radiation-Dominated Universe}\\

\noindent The radiation-dominated universe is characterized by $k=0$ and the scale factor, 
\begin{equation}\label{rada}
a(t)=\sqrt{t},
\end{equation}
and thus,
\begin{equation}\label{radb}
\begin{array}{lcr}
b(t)=t^2 & \dot{b}=2t &\ddot{b}=2\geq0.\\
\end{array}
\end{equation}
Eqs.\eqref{thm1}, \eqref{thm2} and \eqref{thm3} yield,
\begin{equation}\label{rad1}
t=\frac{\tau}{\sigma},
\end{equation}
\begin{equation}\label{rad2}
\chi=2\sqrt{\tau}\sec^{-1}\sqrt{\sigma}, 
\end{equation}
and
\begin{equation}\label{rad}
\rho=\tau\left(\frac{\sqrt{\sigma-1}}{\sigma}+\sec^{-1}\sqrt{\sigma}\right).
\end{equation}
By Theorem \ref{radius} the proper radius of the space slice of $\tau$-simultaneous events, $\mathcal{M}_{\tau}$, is bounded by $2\tau$. The exact value, given by Eq.\eqref{Mradius}, is,

\begin{equation}\label{lightcone2}
\rho_{\mathcal{M}_{\tau}} =\tau \frac{\sqrt{\pi}\,\,\Gamma(\frac{3}{2})}{\Gamma(2)}=\frac{\pi}{2}\tau.
\end{equation}

\noindent By Corollary \ref{tau,s}, $t$ and $\chi$ are smooth functions of $\tau$ and $\rho$. As in Eq.\eqref{inverse} we write $\sigma=\sigma_{\tau}(\rho)$. Then from Eq.\eqref{fermipolar}, the line element for the radiation-dominated universe in polar Fermi coordinates is,
\begin{equation}
ds^2=-\frac{1}{\sigma}\left(1+\sqrt{\sigma-1}\,\sec^{-1}\sqrt{\sigma}\right)^2d\tau^2+d\rho^2+\frac{1}{\sigma}\left(2\tau\sec^{-1}\sqrt{\sigma}\right)^2d\Omega^2.
\end{equation}
The Fermi relative speed of a comoving test particle with fixed coordinate $\chi$ corresponding to parameter $\sigma$ may be calculated from Eq.\eqref{fsalpha} as,
\begin{equation}\label{fermivelrad}
v_{F}=\frac{\sqrt{\sigma-1}}{\sigma}+\frac{\sigma-1}{\sigma}\sec^{-1}\sqrt{\sigma}.
\end{equation}
Using \eqref{rad2}, $v_F$ can also be expressed in terms of $\chi$ as,
\begin{equation}
v_{F}=\frac{1}{2}\sin\Big(\frac{\chi}{\sqrt{\tau}}\Big)+\frac{\chi}{4\sqrt{\tau}}\Big(1-\cos\Big(\frac{\chi}{\sqrt{\tau}}\Big)\Big).
\end{equation}
Applying Corollary \ref{alphabound}, we find that the asymptotic limit of the Fermi relative speed of a comoving test particle is $\pi/2$ times the speed of light.\\  

\noindent It follows from Corollary \ref{geometry} or by directly comparing Eqs.\eqref{rad} and \eqref{fermivelrad} that,

\begin{equation}
v_{F} =\frac{\rho}{\tau}-\frac{\sec^{-1}\sqrt{\sigma}}{\sigma},
\end{equation}
so that for large $\sigma$, or equivalently for large proper distance $\rho$, $v_{F} \approx\rho/\tau$, in analogy to the Milne (or Minkowski) universe.  However, in the Milne universe, the proper distance from the Fermi observer at time $\tau$ is bounded by, and asymptotically equal to, $\tau$. Thus, $v_{F}= \rho/\tau<1$. By contrast, the corresponding bound in the radiation-dominated universe, i.e., the radius of $\mathcal{M}_{\tau}$, is $(\pi/2)\tau$ so that $v_{F}=\rho/\tau<\pi/2$, with asymptotic equality. From this point of view, the existence of superluminal Fermi velocities in the radiation-dominated universe may be attributed to the greater diameters of the Fermi space slices, $\{\mathcal{M}_{\tau}\}$, in comparison to the Milne space-time.\\

\noindent \emph{Matter-Dominated Universe}\\

\noindent The final case we consider is the matter-dominated universe. For this 
spacetime  $k=0$ and the scale factor is given by
\begin{equation}\label{mata}
a(t)=t^{\alpha},
\end{equation}
where $\alpha=2/3$.  The inverse of $a$ and its derivatives are given by,
\begin{equation}
\begin{array}{lcr}\label{matb}
b(t)=t^{3/2} & \dot{b}(t)=\frac{3}{2}t^{1/2} & \ddot{b}(t)=\frac{3}{4}t^{-1/2}\geq0.\\
\end{array}
\end{equation}
By Theorem \ref{general},
\begin{equation}
t=\frac{\tau}{\sigma^{3/4}},
\end{equation}
\begin{equation}\label{mat2}
\chi=3\tau^{1/3}\left[\frac{\sqrt{\pi}\,\Gamma(\frac{5}{4})}{\Gamma(\frac{3}{4})}-\frac{_2F_1(\frac{1}{4},\frac{1}{2};\frac{5}{4};\frac{1}{\sigma})}{\sigma^{1/4}}\right],
\end{equation}
and
\begin{equation}\label{mat3}
\rho=\tau\left[\frac{\sqrt{\pi}\,\Gamma(\frac{5}{4})}{\Gamma(\frac{3}{4})}-\left(\frac{\sigma-1}{\sqrt{\sigma}}\right)^{3/2}+ \,\,_2F_1\left(-\frac{3}{4},\frac{1}{2};\frac{1}{4};\frac{1}{\sigma}\right)\sigma^{3/4}\right],
\end{equation}
where $_2F_1(\cdot,\cdot;\cdot;\cdot)$ is the Gauss hypergeometric function. It follows from Theorem \ref{fermi} that the metic in Fermi polar coordinates is given by,
\begin{equation}
ds^2=g_{\tau \tau}d\tau^2+d\rho^2+\frac{\tau^{4/3}}{\sigma}\chi(\tau,\sigma)^2 d\Omega^2,
\end{equation} 
where $\sigma=\sigma_{\tau}(\rho)$, as in Eq.\eqref{inverse}, is the inverse of the function given by Eq.\eqref{mat3}, $\chi(\tau,\sigma)$ is given by Eq.\eqref{mat2}, and,

\begin{equation}
g_{\tau \tau}=-\frac{\tau^{4/3}}{\sqrt{\sigma}}\left[1+\frac{\sqrt{\sigma-1}}{\sigma^{1/4}}\left(\frac{\sqrt{\pi}\,\Gamma(\frac{5}{4})}{\Gamma(\frac{3}{4})}-\frac{_2F_1(\frac{1}{4},\frac{1}{2};\frac{5}{4};\frac{1}{\sigma})}{\sigma^{1/4}}\right)\right]^2.
\end{equation} 
From Eq.\eqref{fsalpha}, the Fermi speed of a comoving test paricle with worldline $\gamma_0$ is given by,
\begin{equation}
v_F(\chi_{0})=\frac{3}{4}\bigg(\int_1^{\sigma_{0}}\frac{1}{\sigma^{\frac{7}{4}}\sqrt{\sigma-1}}d\sigma-\frac{1}{3\sigma_{0}}\int_1^{\sigma_{0}}\frac{1}{\sigma^{\frac{3}{4}}\sqrt{\sigma-1}}d\sigma\bigg).
\end{equation}
Corollaries \ref{alphabound} and \ref{geometry2} then give the supremum of this speed and the proper radius of $\mathcal{M}_{\tau}$ as,
\begin{equation}
\lim_{\sigma_{0}\to\infty}v(\sigma_{0})=\frac{3}{4}\frac{\sqrt{\pi}\,\Gamma(\frac{5}{4})}{\Gamma(\frac{7}{4})}=\frac{\rho_{\mathcal{M}_{\tau}}}{\tau}\approx 1.31103.
\end{equation}
This shows that the matter-dominated Universe supports superluminal Fermi velocities at proper distances away from the Fermi observer, sufficiently close to the radius, $\rho_{\mathcal{M}_{\tau}}$, of the Fermi space slice of $\tau$-simultaneous events.\\

\noindent \textbf{{\normalsize 6.  Conclusions}}\\

\noindent Theorems \ref{polarform} and \ref{fermi} of this paper give transformation formulas for Fermi coordinates for observers comoving with the Hubble flow in expanding Robertson-Walker spacetimes, along with exact expressions for the metric tensors in those coordinates.  We have shown that Fermi coordinates are global for non inflationary cosmologies, i.e., when the scale factor $a$ satisfies the condition, $\ddot{a}(t)\leq0$. Global Fermi coordinates may be useful for the purpose of studying the influence of global expansion on local dynamics and kinematics \cite{CG}.\\

\noindent Our results also apply to cosmologies that include inflationary periods, though in such cases the Fermi charts are local.  However, if the space-time includes an early inflationary period, but $\ddot{a}(t)\leq0$ for all $t\geq t_{0}$, for some $t_{0}$, then by recalibrating the scale factor to $\tilde{a}(t)=a(t+t_{0})$, a global Fermi coordinate chart and all of our results are immediately available for the submanifold of space-time events with $t>t_{0}$.\\

\noindent  In Sect. 4 we found exact expressions for the Fermi relative velocities of comoving (and necessarily receding) test particles. It was shown that superluminal relative Fermi velocities exist, and that those velocities increase with proper distance from the observer.  Superluminal least upper bounds were given for cosmologies whose scale factors follow power laws. We note that although the overall qualitative behavior of the relative speeds $v_{H}$  and $v_{F}$ may be compared, it follows from Corollary \ref{disjoint} that at any given proper time of the Fermi observer, $v_{H}$ and $v_{F}$ measure speeds of the same comoving test particle (with fixed coordinate $\chi_{0}$) only when it is at different spacetime points.  For a comoving test particle at a given spacetime point, $v_{H}$ and $v_{F}$ give the particle's relative speeds at different times of the Fermi observer.  \\

\noindent The existence of superluminal relative velocities bears on the question of whether space is expanding, c.f. \cite{gron, CG, cook, confusion, KC10} and the numerous references in those papers.  On this matter, our results may be contrasted with arguments given in \cite{cook}. In that paper, the coordinate transformation for the Milne universe, repeated in our Eqs.\eqref{milne2} and \eqref{milne3}, was used to compare the Hubble and Fermi (or Minkowski) relative speeds $v_{H}$  and $v_{F}$ of comoving test particles. The incongruity of superluminal Hubble speeds and necessarily subluminal Minkowski speeds in the Milne universe was discussed. It was argued that the analogous qualitative difference would also occur for cosmologies that include matter or radiation, through a comparison of Hubble speeds to speeds defined via coordinates with a ``rigid'' radial coordinate.\footnote{c.f. Sect. V of \cite{cook}, p. 63.}  However, if the latter class of coordinates includes Fermi coordinates --- the coordinates used to deduce that conclusion for the Milne universe --- our results in Sect. 5 for the radiation-dominated and matter-dominated cosmological models do not support that conjecture.\\

\noindent For cosmological models with a scale factor of the form $a(t)=t^{\alpha}$ for $0< \alpha\leq1$, the existence of superluminal relative velocities of comoving particles may be understood in terms of the geometry of the simultaneous space slices, $\{\mathcal{M}_{\tau}\}$. In Sect. 5, it was shown through the use of specific examples that superluminal relative Fermi velocities exist provided ``there is enough space'' in the sense that the proper radius $\rho_{\mathcal{M}_{\tau}}$ of $\mathcal{M}_{\tau}$ satisfies the condition $\rho_{\mathcal{M}_{\tau}}>\tau$. 
\\ 

\noindent In Sect. 5 we also showed that the relative Fermi speeds of comoving particles are bounded by one-half the speed of light in the de Sitter universe, a space-time considered to be ``expanding.''  By way of contrast, superluminal relative Fermi velocities were proven to exist in \cite{KC10} in the static Schwarzschild space-time (with interior and exterior metric joined at the boundary of the interior fluid), a space-time that is not usually regarded as ``expanding.''  Thus, it may be argued that existence of superluminal relative velocities, in general, is not the appropriate criterion for the purpose of defining what is meant by the expansion of space.\\

\noindent  Does space expand?  An affirmative answer may be given for the non inflationary Robertson-Walker cosmologies studied in this paper, in the following sense. For any comoving geodesic observer, the Fermi space slices of $\tau$-simultaneous events, $\{\mathcal{M}_{\tau}\}$, that foliate the space-time have finite proper diameters that are increasing functions of the observer's proper time. This is Theorem \ref{radius}a. Theorem \ref{radius}b explains how it is possible for the space slices to have only finite extent. What stops these hypersurfaces of constant Fermi time from continuing beyond their proper diameters? The theorem makes precise the way in which all space-time events are simultaneous to synchronous time $t=0$, the big bang in cosmological models admitting an initial singularity. \\

\noindent \textbf{Acknowledgment.} E. Randles was partially supported during the course of this research by the Louis Stokes Alliance for Minority Participation (LSAMP) program at California State University Northridge. We thank Peter Collas for helpful suggestions.


\begin{thebibliography}{99}


\bibitem{gron} Gr\o n, \O., Elgar\o y, \O.: Is space expanding in the Friedmann universe models? \textit{Am. J. Phys.} \textbf{75} 151-157 (2006).

\bibitem{CG} Carrera, M., Giulini, D.: Influence of global cosmological expansion on local dynamics and kinematics \textit{Rev. Mod. Phys.} \textbf{82} 169-208 (2010).


\bibitem{cook} Cook, Richard, Burns, M.: Interpretation of the cosmological metric  \textit{Am. J. Phys.} \textbf{77} 59-66 (2009).

\bibitem{confusion} Davis, T., Lineweaver, C.: Expanding confusion: common misconceptions of cosmological horizons and the superluminal expansion of the Universe. \textit{Publications of the Astronomical Society of Australia}, \textbf{21}, 97-109 (2004) (astro-ph/0310808).

\bibitem{walker} Walker, A. G.: Note on relativistic mechanics  \textit{Proc. Edin. Math. Soc.} \textbf{4}, 170-174 (1935).

\bibitem{rindler} Rindler, W., Public and private space curvature in 
Robertson-Walker universes, \textit{Gen. Rel. Grav.}  \textbf{13}, 457-461 (1981)

\bibitem{page} Page, D. N., How big is the universe today? \textit{Gen. Rel. Grav.}  \textbf{15}, 181-185 (1983)

\bibitem{ellis} Ellis, G. F. R., Matravers, D. R.: \textit{Spatial Homogeneity and the size 
of the universe.} In: \textit{A Random Walk in Relativity and Cosmology}. Ed N Dadhich, J K Rao, J V Narlikar, and C V Vishveshswara. Wiley Eastern, Delhi. (1985), 92-108.


\bibitem{MTW} Misner, C. W., Thorne, K. S., and Wheeler, J. A.  \textit{Gravitation}, W. H. Freeman, San Francisco, (1973) p. 329.

\bibitem{MM63} Manasse, F. K., Misner, C. W.: Fermi normal coordinates and some basic concepts in differential geometry \textit{J. Math. Phys.} \textbf{4}, 735-745 (1963).

\bibitem{LN79} Li, W. Q., Ni, W. T.: Expansions of the affinity, metric and geodesic equations in Fermi normal coordinates about a geodesic \textit{J. Math. Phys.} \textbf{20}, 1925-1929 (1979).

\bibitem{oniell} O'Neill, B.: \textit{Semi-Riemannian geometry with applications to relativity}  Academic Press, New York, (1983).

\bibitem{KC1}  Klein, D., Collas, P.: General Transformation Formulas for Fermi-Walker Coordinates \textit{Class. Quant. Grav.}  \textbf{25}, 145019 (17pp) DOI:10.1088/0264-9381/25/14/145019, [gr-qc] arxiv.org/abs/0712.3838v4 (2008).


\bibitem{CM} Chicone, C., Mashhoon, B.: Explicit Fermi coordinates and tidal dynamics in de Sitter and G\"odel spacetimes \textit{Phys. Rev.} D \textbf{74},  064019 (2006).

\bibitem{KC3}  Klein, D., Collas, P.: Exact Fermi coordinates for a class of spacetimes, \textit{J. Math. Phys.} \textbf{51} 022501(10pp) DOI: 10.1063/1.3298684, arXiv:0912.2779v1 [math-ph] (2010).

\bibitem{Ishii} Ishii, M., Shibata, M., Mino, Y.: Black hole tidal problem in the Fermi normal coordinates  \textit{Phys. Rev.} D \textbf{71},  044017 (2005).

\bibitem{Marz} Marzlin, K-P.: Fermi coordinates for weak gravitational fields  \textit{Phys. Rev.} D \textbf{50}, 888-891 (1994).

\bibitem{FG} Fortini, P. L., Gualdi, C.: Fermi normal co-ordinate system and electromagnetic detectors of gravitational waves. I - Calculation of the metric  \textit{Nuovo Cimento} B \textbf{71}, 37-54 (1982).


\bibitem{KC9} Klein, D., Collas, P.: Timelike Killing fields and relativistic statistical mechanics, {\it Class. Quantum Grav.}  \textbf{26}, 045018 (16 pp) arXiv:0810.1776v2 [gr-qc] (2009).

\bibitem{KY} Klein, D., Yang, W-S.: Grand canonical ensembles in general relativity, arXiv:1009.3846v1 [math-ph] (2010).

\bibitem{B} Bimonte, G., Calloni, E., Esposito, G., Rosa, L.:  Energy-momentum tensor for a Casimir apparatus in a weak gravitational field \textit{Phys. Rev.} D \textbf{74}, 085011 (2006).

\bibitem{P80} Parker, L.: One-electron atom as a probe of spacetime curvature \textit{Phys. Rev.} D \textbf{22} 1922-34 (1980).

\bibitem{PP82} Parker, L., Pimentel, L. O.:  Gravitational perturbation of the hydrogen spectrum \textit{Phys. Rev.} D \textbf{25}, 3180-3190 (1982)

\bibitem{GP} Griffiths, J., Podolsky, J., \textit{Exact Space-Times in Einstein's General Relativity,} Cambridge Monographs on Mathematical Physics, Cambridge University Press, Cambridge, UK (2009).

\bibitem{bolos2} Bol\'os, V.: Lightlike simultaneity, comoving observers and distances in general relativity. \textit{J. Geom. Phys.} \textbf{56}, 813-829 (2006). 

\bibitem{bolos} Bol\'os, V.: Intrinsic definitions of ``relative velocity'' in general relativity \textit{Comm. Math. Phys.}  \textbf{273},  217-236 (2007).


\bibitem{KC10} Klein, D., Collas, P.: Recessional velocities and Hubble's Law in Schwarzschild-de Sitter space  \textit{Phys. Rev. D15}, \textbf{81}, 063518 Archive: arxiv.org/abs/1001.1875 [gr-qc] (2010).

\bibitem{hawking} Gibbons, G. W., Hawking, S. W.: Cosmological event horizons, thermodynamics, and particle creation, \textit{Phys. Rev.} D \textbf{15} 2738-2751 (1977).


\end{thebibliography}
\end{document}